\DeclareMathOperator{\ww}{w}
\DeclareMathOperator{\sgn}{sgn}
\newcommand{\inhib}{\relbar\mapsfromchar}
\numberwithin{theorem}{section}
\newtheorem{remark}[theorem]{Remark}        
\newtheorem{example}[theorem]{\emph{Example}}        
\newcommand{\TheTitle}{Algebraic network reconstruction of discrete dynamical systems} 
\newcommand{\TheAuthors}{H. A. Harrington, M. Stillman, and A.~Veliz-Cuba}
\headers{\TheTitle}{\TheAuthors}
\title{{\TheTitle}\thanks{
\funding{H.A.H. gratefully acknowledges funding from EPSRC EP/R018472/1, EP/R005125/1 and EP/T001968/1, a Royal Society University Research Fellowship RGF$\backslash$EA$\backslash$201074 and UF150238. For the purpose of Open Access, the authors have applied a CC BY public copyright licence to any  Author Accepted Manuscript (AAM) version arising from this submission.   A.VC. was partially supported by the Simons Foundation grant 516088.}}}
\author{
  Heather A. Harrington\thanks{Mathematical Institute, University of Oxford
    (\email{harrington@maths.ox.ac.uk}),}
  \and
  Mike Stillman\thanks{Department of Mathematics, Cornell University (\email{mes15@cornell.edu}).}
  \and
  Alan Veliz-Cuba \thanks{Department of Mathematics, University of Dayton (\email{avelizcuba1@udayton.edu})
}}
\begin{document}

\maketitle

\begin{abstract}
We present a computational algebra solution to reverse engineering the network structure of discrete dynamical systems from data. We use monomial ideals to determine dependencies between variables that encode constraints on the possible wiring diagrams underlying the process generating the discrete-time, continuous-space data. Our work assumes that each variable is either monotone increasing or decreasing. We prove that with enough data, even in the presence of small noise, our method can reconstruct the correct unique wiring diagram.

\end{abstract}

\begin{keywords}
reverse engineering, discrete dynamical systems, algebraic systems biology, network inference, wiring diagrams
\end{keywords}

\begin{AMS}
  13P25, 37N25, 92B05, 05E40, 46N60, 92C42, 68R10, 90B10, 97N70, 62-07   	  	
\end{AMS}
%

\section{Introduction}

Many biological systems have been modeled using discrete-time systems of the form $f=(f_1,\ldots,f_n):X^n\rightarrow X^n$. Here, each coordinate function $f_i$ describes how the behavior of variable $i$ depends on the other variables. Such a modeling framework has been used successfully to study biological features such as equilibrium and periodic behavior \cite{Eager2016,Townley2012,Arat2015,Velizlacop}. 

In the cases where $f$ is unknown, one must infer the structure of the network from data. We refer to this inverse problem as the network reconstruction problem, which has been studied in the case that $X$ is a finite set. In this setting, tools from computational algebra were used to find the best networks given data \cite{Veliz-Cuba2012,Jarrah2012}; experimental data are typically continuous so the data must be quantized with discretizations algorithms \cite{Dimitrova2010}. Theory and practical implications, such as experimental design for network reconstruction, has recently been studied \cite{dimitrova2022algebraic}. Recent work has proposed how to reconstruct Boolean functions \cite{sun2022data}. 
 In this manuscript we study the problem of reconstructing the network structure of discrete-time continuous-space dynamical systems. By focusing on continuous-space dynamical systems, we eliminate the need of discretization algorithms and their unknown effect on network reconstruction. Furthermore, experimental data are noisy due to measurement errors and stochasticity, so we also study the effect of noisy data on network reconstruction.

For simplicity in the presentation we consider dynamical systems defined by 

\[ f=(f_1,\ldots,f_n):[0,1]^n\rightarrow [0,1]^n, \]
but we remark that our results are valid even if the space is not bounded.
 The dynamics of such systems are given by iteration of $f$, $x(t+1)=f(x(t))$. The coordinate functions $f_1,\ldots f_n$ describe how a variable depends on the others and determine the structure of the network. The structure of the network is given by a signed directed graph with nodes $x_1,\ldots,x_n$ (or $1,\ldots,n$) such that there is an edge from $x_i$ to $x_j$ if $f_j$ depends on $x_i$. The sign of this edge is positive if $f_j$ is increasing with respect to $x_i$, and is negative if $f_j$ is decreasing with respect to $x_i$. We will focus on dynamical systems where each edge has a sign; that is, dynamical systems where each $f_j$ is either monotone increasing or decreasing with respect to its variables. We call such functions monotone.

Our goal is to use dynamical information of $f:[0,1]^n\rightarrow [0,1]^n$ to reconstruct the structure of the network. Namely, we want to determine which edges appear and their signs. The statement of the problem is as follows. Consider $P$ a finite subset of $[0,1]^n$ such that $f|_P$ is known. How can we use knowledge of $f|_P$ to reconstruct the network?

The paper is organized as follows. We present notation and definitions in \cref{sec:pre}. Algebraic notions  for discrete dynamical systems are presented in \cref{sec:alg}.
Our main results are in
\cref{sec:main} with selection of diagrams in \cref{sec:score}, and the conclusions follow in
\cref{sec:conclusions}.

\section{Preliminaries}
\label{sec:pre}

We first consider the case of determining which variables appear in a
single coordinate function. Namely, consider a (possibly unknown)
function $h:[0,1]^n\rightarrow [0,1]$ such that $h|_P$ is known, where
$P$ is a finite set of points in $[0,1]^n$. We will refer to the pair $D = (P,
h|_P)$ as \emph{observed data}.  We will focus on determining from
observed data which variables affect $h$ as well as whether $h$ is
increasing or decreasing with respect to these variables.

We introduce terminology required to study this situation.

\begin{definition}\rm
Suppose that $h:[0,1]^n\rightarrow [0,1]$ is a function. We say that $h$ is

(1) \emph{independent of $x_i$} if for all choices
$\{c_1, \ldots, c_{i-1}, c_{i+1}, \ldots, c_n\}$, the function of one variable
  $g(x_i) := h(c_1, \ldots, c_{i-1}, x_i, c_{i+1}, \ldots, c_n)$ is a constant function in $x_i$.

(2)
\emph{monotone increasing in the variable $x_i$} if
all the $g(x_i)$ as in (1) are monotone increasing functions (i.e. for
$a < b$, $g(a) \le g(b)$).

(3)
 \emph{monotone decreasing in the variable $x_i$} if
all the $g(x_i)$ as in (1) are monotone decreasing functions (i.e. for
$a < b$, $g(a) \ge g(b)$).

(4)  \emph{monotone} if for each $x_i$, it satisfies one of (1), (2), or (3).
  \end{definition}

\begin{remark}
  In this paper, all monotone functions which appear will be continuous.
\end{remark}

Given a discrete dynamical system $f = (f_1, \ldots, f_n)$, where each $f_i$ is monotone, then the monotonicity information as defined in Definition 2.1 determines the network structure, i.e., a signed directed graph. The incoming edges to node $x_i$ are determined by the monotonicity information for $f_i$. We call this information for $f_i$ the local wiring diagram of variable $x_i$.

\begin{definition}\rm
  (a) A \emph{local wiring diagram} is a set $w$ with elements of
  the form $(x_i,s)$, where $s\in\{1,-1\}$ and any variable $x_i$ appears
  at most once.

  (b) Given a monotone function
  $h:[0,1]^n\rightarrow [0,1]$, the \emph{local wiring diagram of $h$} is
  \begin{align*} {\ww}(h)  := & \left\{ (x_k, 1) \mid \text{$h$ is monotone increasing in $x_k$, but not independent of$ x_k$} \right\} \\
    & \cup
    \left\{ (x_k, -1) \mid \text{$h$ is monotone decreasing in $x_k$, but not independent of $x_k$} \right\}.
    \end{align*}
\end{definition}

\begin{example} Consider $h:[0,1]^5\rightarrow [0,1]$ defined by $h(x)=\frac{x_1}{1+x_1}\frac{x_3^2}{1+x_3^2} \frac{1}{1+x_5}$. Since $h$ is increasing with respect to $x_1$ and $x_3$, and decreasing with respect to $x_5$, it follows that the local wiring diagram of $h$ is $\ww(h)=\{(x_1,1),(x_3,1),(x_5,-1)\}$.
\end{example}

\begin{example} Consider $h:[0,1]^5\rightarrow [0,1]$ defined by $h(x)=1$. Since $h$ is independent of all variables, its local wiring diagram is the empty set, $\ww(h)=\{ \ \}$.
\end{example}

\begin{definition}\label{def:minimal}\rm
  Given observed data $D = (P, h|_P)$, we say that a local wiring diagram
  \[w=\{(x_{k_1},s_1), (x_{k_2},s_2), \ldots,(x_{k_m},s_m)\}\] is
  \emph{consistent with the data $D$} if 
  there exists a continuous monotone function $h^*:[0,1]^n\rightarrow [0,1]$ such
  that $\left.h^*\right|_P=h|_P$ and $\ww(h^*)\subseteq w$.  The local
  wiring diagram $w$ is called a \emph{minimal local wiring diagram} if it
  is consistent with the data and does not contain a smaller (with
  respect to inclusion) consistent local wiring diagram.
\end{definition}

We denote by $W_D$ the set of all local wiring diagrams consistent
with the data $D$.

The two conditions above mean that a minimal local wiring diagram is a set of
variables (with signs) that is consistent with the data (condition 1) and is minimal
with respect to inclusion (condition 2). This definition incorporates
the biological perspective that local wiring diagrams should be as simple as
possible while still being consistent with the data.

\begin{example}\label{running:example}\rm
Consider $h:[0,1]^3 \rightarrow [0,1]$ monotone and suppose that
$h(.1,.8,.3)=.2$, $h(.9,.5,.1)=.5$, and $h(.5,.3,.9)=.7$. In this case
$P=\{(.1,.8,.3),(.9,.5,.1),(.5,.3,.9)\}$ and $h|_P$ is known. We claim
that there are two minimal local wiring diagrams, namely $w_1=\{(x_2,-1)\}$
and $w_2=\{(x_1,1),(x_3,1)\}$. We remark that at this point we are
using this example to illustrate the definition only, not to show how
minimal wiring diagrams are found.

The local wiring diagram $w_1$ is a minimal local wiring diagram because there exists a monotone
function $h^*:[0,1]^3\rightarrow [0,1]$ given by $h^*(x)=1-x_2$ such
that $h^*|_P=h|_P$. That is, $w_1$ satisfies the first condition in
the definition. Now, suppose $g:[0,1]^n\rightarrow [0,1]$ is another
monotone function such that $\left.g\right|_P=h|_P$ and
$\ww(g)\subseteq w_1$. We then have two cases
$\ww(g)=\{\ \}$ or $\ww(g)=\{(x_2,-1)\}$. Since
$g|_P=h|_P$, we see that $g$ cannot be constant, so
$\ww(g)=\{(x_2,-1)\}$. Thus, $w_1$ satisfies the second
condition in the definition.

To show that $w_2$ is also a minimal wiring diagram, we first observe
that $h^*:[0,1]^3\rightarrow [0,1]$ given by $h^*(x)=\frac{x_1+x_3}{2}$
is a monotone function that satisfies $h^*|_P=h|_P$ and
$\ww(h^*)=w_2$. Thus, $w_2$ satisfies the first condition in the
definition. Second, suppose $g:[0,1]^n\rightarrow [0,1]$ is another
monotone function such that $\left.g\right|_P=h|_P$ and
$\ww(g)\subseteq w_2$. As shown with $w_1$, $g$ cannot be
constant, so we have to show that $\ww(g)\neq \{(x_1,1)\}$ and
$\ww(g)\neq \{(x_3,1)\}$. Since $g(.9,.5,.1) =
h(.9,.5,.1)=.5$, $g(.5,.3,.9)=h(.5,.3,.9)=.7$, and $g$ is increasing
with respect to $x_1$, $g$ cannot depend only on $x_1$. Similarly, $g$
cannot depend on $x_3$ only. Then, $\ww(g)=w_2$ and so $w_2$
satisfies the second condition in the definition.

Note that the functions $h^*$ used to show that $w_1$ and $w_2$ are
minimal are not necessarily unique. For example, consider
the functions $h^*(x)=b+\frac{a}{c^n+x_2^n}$ for $w_1$ and
$h^*(x)=b+a\frac{x_1^n}{c^n+x_1^n}\frac{x_3^n}{d^n+x_3^n}$ for $w_2$
(with appropriate values for $a,b,c,d,n$).
\end{example}

\begin{remark}
The precise form of the functions $h^*$ that we use in the definition
is not known. Therefore, finding the minimal local wiring diagrams by
constructing functions that are consistent with the data is not
feasible.
\end{remark}

To make network reconstruction feasible we will show that we can find minimal local wiring diagrams without having to construct the functions. First we need to define what it means that $w$ is consistent with the data using the data only. Namely, if $p,p'\in P$
and $h(p)<h(p')$, then the increase in the output has to correspond to
an increase in an activator or a decrease in a repressor; that is, an
increase in some $i$-th entry of the input such that $(x_i,1)\in w$,
or to a decrease in some $i$-th entry of the input such that
$(x_i,-1)\in w$. The following definition formalizes this idea.

\begin{definition}\rm
Let $p,p'\in P$ such that $h(p)<h(p')$. We say that the local wiring diagram $w$ is \emph{consistent}
with the pair $(p,p')\in P^2$ if for some $i$, $p_i<p'_i$ and $(x_i,1)\in w$, or
$p_i>p'_i$ and $(x_i,-1)\in w$. Equivalently,
for some $i$, $(x_i,\sgn(p'_i-p_i))\in w$.
We denote by $W_{(p,p')}$ the set of all local wiring
diagrams that are consistent with the pair $(p,p')$.
\end{definition}

Note that if either $W_D$ or $W_{(p,p')}$ contains a local wiring diagram $w$, then it contains
every local wiring diagram $w'$ for which $w \subseteq w'$.  Therefore, in order
to describe these sets, we need only consider minimal elements (with respect to inclusion).
If $w_1, \ldots, w_r$ are the minimal elements of $W_D$, we often write
\[ W_D = \langle w_1, w_2, \ldots, w_r \rangle. \]

\begin{example}\label{running2}\rm
  Continuing $\ref{running:example}$, 
  let $P = \{ p_1, p_2, p_3\}$, where $p_1 = (.1,.8,.3)$, $p_2 =
  (.9,.5,.1)$, and $p_3 = (.5,.3,.9)$.
  Consider again $h:[0,1]^3 \rightarrow [0,1]$ monotone and suppose that
  $h(p_1) = .2$, $h(p_2) = .5$, and $h(p_3) = .7$.

  First let us find the elements of $W_{(p_1,p_3)}$
  (note $h(p_1)<h(p_3)$). The increase in the output
  has to correspond to the increase in $x_1$, the decrease in $x_2$,
  or the increase in $x_3$. It follows that $w\in W_{(p_1,p_3)}$ if and
  only if $(x_1,1)\in w$ or $(x_2,-1)\in w$ or $(x_3,1)\in w$.

  Similarly, $w \in W_{(p_1,p_2)}$ if and only if $(x_1,1)\in w$ or $(x_2,-1)\in w$ or $(x_3,-1)\in w$.
  Also, $w \in W_{(p_2,p_3)}$ if and only if $(x_1,-1)\in w$ or $(x_2,-1)\in w$ or $(x_3,1)\in w$.
\end{example}

The following Theorem is a continuous-space version of the discrete-space Lemma 2.4-Theorem 2.5 in \cite{Veliz-Cuba2012} and requires a different proof.

\begin{theorem}\label{thm:def}
  Let $D = (P, h|_P)$ be observed data. Then the set of local wiring diagrams
  consistent with the data $D$ is exactly the set of wiring diagrams which are
  consistent with each pair of points $p,p' \in P$ satisfying $h(p) < h(p')$, that is:
\[ W_D=\bigcap_{\substack{(p,p')\in P^2 \\ h(p)<h(p')}} W_{(p,p')}. \]
\end{theorem}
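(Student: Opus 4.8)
The plan is to prove the set equality by establishing both inclusions separately.

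For the inclusion $W_D\subseteq\bigcap_{(p,p')}W_{(p,p')}$, suppose $w\in W_D$, so there is a continuous monotone $h^*$ with $h^*|_P=h|_P$ and $\ww(h^*)\subseteq w$. Fix a pair $(p,p')$ with $h(p)<h(p')$; then $h^*(p)<h^*(p')$. I would interpolate between $p$ and $p'$ one coordinate at a time: set $q_0=p$ and let $q_k$ agree with $p'$ in the first $k$ coordinates and with $p$ thereafter, so $q_n=p'$. Since the positive total change $h^*(q_n)-h^*(q_0)$ telescopes as $\sum_{k}\bigl(h^*(q_k)-h^*(q_{k-1})\bigr)$, some single-coordinate step $k$ strictly increases $h^*$. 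As $q_{k-1}$ and $q_k$ differ only in coordinate $k$ (with values $p_k$ and $p'_k$), the trichotomy in Definition 2.1 forces $h^*$ to be nonconstant in $x_k$ and monotone in the direction matching $\sgn(p'_k-p_k)$, whence $(x_k,\sgn(p'_k-p_k))\in\ww(h^*)\subseteq w$; therefore $w\in W_{(p,p')}$.

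For the reverse inclusion, write $w=\{(x_{k_1},s_1),\ldots,(x_{k_m},s_m)\}$ and define the signed projection $\phi:[0,1]^n\to\mathbb{R}^m$ by $\phi(x)=(s_1x_{k_1},\ldots,s_mx_{k_m})$. The reformulation is that $w\in W_{(p,p')}$ says precisely that $\phi_j(p')>\phi_j(p)$ for some $j$; taking contrapositives over all pairs shows the hypothesis is equivalent to the statement that $\phi(p)\le\phi(p')$ coordinatewise implies $h(p)\le h(p')$ (so in particular $\phi(p)=\phi(p')$ forces $h(p)=h(p')$). Thus $\phi(p)\mapsto h(p)$ is a well-defined order-preserving map on the finite poset $\phi(P)\subseteq\mathbb{R}^m$, and the task reduces to extending it to a continuous monotone function.

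I would produce this extension explicitly, which is the constructive heart and the main obstacle. Enumerate the distinct points $y^{(1)},\ldots,y^{(N)}$ of $\phi(P)$ with values $v_1,\ldots,v_N$, and set $H(y)=\max_a g_a(y)$ where $g_a(y)=v_a-L\sum_j(y^{(a)}_j-y_j)_+$, with $(t)_+=\max(t,0)$ and $L$ a large constant. Each $g_a$ is continuous and nondecreasing in every $y_j$, so $H$ is continuous and monotone; evaluating at $y^{(b)}$ gives $v_b$ from the term $a=b$, while for $a\neq b$ the case $y^{(a)}\le y^{(b)}$ yields $v_a\le v_b$ directly and the case $y^{(a)}\not\le y^{(b)}$ is driven below $v_b$ once $L$ exceeds a threshold that is uniform by finiteness, so $H(y^{(b)})=v_b$. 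Clamping $H$ to $[\min_a v_a,\max_a v_a]\subseteq[0,1]$ preserves continuity, monotonicity, and the interpolation values. Then $h^*=H\circ\phi$ is continuous and monotone with $h^*|_P=h|_P$, and since $H$ is coordinatewise nondecreasing, $h^*$ is increasing or decreasing in $x_{k_j}$ according to $s_j$ and independent of every other variable, so $\ww(h^*)\subseteq w$ and $w\in W_D$. The delicate point is obtaining a genuinely continuous (not merely step-function) monotone interpolant whose range lies in $[0,1]$ without creating dependence on forbidden variables; the max-of-pieces formula together with clamping resolves all three requirements simultaneously, leaving only the quantitative choice of $L$.
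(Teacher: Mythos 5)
Your proof is correct, and it follows the same overall skeleton as the paper's --- reduce to the coordinates named in $w$, verify that the induced finite data set is monotone (your contrapositive reformulation via the signed projection $\phi$ is exactly the paper's "the data $D'$ is monotone increasing" step, except that you carry the signs along in $\phi$ rather than assuming WLOG that all signs are $+1$) --- but it diverges in the constructive step, which is the real content. The paper extends the finite monotone data to a rectangular grid via a max over comparable data points and then invokes multilinear interpolation to get a continuous monotone interpolant; you instead write down a single closed-form interpolant $H(y)=\max_a\bigl(v_a-L\sum_j(y^{(a)}_j-y_j)_+\bigr)$, a monotone variant of the McShane extension, and verify directly that it is continuous, coordinatewise nondecreasing, and interpolates once $L$ clears a finite threshold. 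Your route buys explicitness and self-containedness: there is no appeal to properties of multilinear interpolation on grids, the clamping step cleanly handles the range constraint $[0,1]$, and the uniform choice of $L$ is the only quantitative point. The paper's route, by contrast, produces a piecewise-multilinear interpolant that connects naturally to the spline literature it cites (and could be upgraded to smooth interpolants). You also spell out the "easy" forward inclusion via the coordinate-by-coordinate telescoping argument, which the paper omits entirely; that argument is sound, including the implicit observation that a telescoping step with a strict increase can only occur at a coordinate where $p_k\neq p'_k$, so the trichotomy of Definition 2.1 applies. No gaps.
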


\begin{proof}
  The left hand side is easily seen to be contained in the right hand side.
  For the opposite direction, consider $w$ in the right hand side (i.e. $w \in W_{{p,p'}}$ for all pairs of points of the data $D$
  with $h(p) < h(p')$) and without loss of generality assume $w=\{(x_1,1),\ldots,(x_k,1)\}$. Then, define $Q=\{(p_1,\ldots,p_k):p\in P\}$.

First, we claim that the data $D'=(Q,h|_P)$  is monotone increasing. That is, for $q,q'\in Q$ and  $v=h(p)$ and $v'=h(p')$,  if $q\leq q'$ (entrywise) then $v\leq v'$ (note that we are not saying $h$ is monotone). Indeed, by contradiction suppose $v'<v$, then since $w$ is consistent with $(p',p)$, there is $i$ such that $(x_i,1)\in w$ and $p'_i<p_i$. Then $1\leq i \leq k$ and $q'_i<q_i$. This contradicts the fact that  $q\leq q'$ entrywise. Thus, the data are monotone.

Second, we extend the data to cover a rectangular grid of values. Namely, for $y\in [0,1]^k$, we define 
$
g(y):=\max\{ v : y\leq q \text{ and } (q,v)\in D'\}.
$
 We remark that $g$ is monotone increasing, so the data we obtain by restricting $g$ to a rectangular grid will also be monotone increasing.

Third, since we have monotone data on a rectangular grid, we can use multilinear interpolation to obtain a continuous function $L:[0,1]^k\rightarrow [0,1]$ that fits the data on a grid. Then, if we define $h^*:[0,1]^n\rightarrow [0,1]$ by $h(x)=L(x_1,\ldots,x_k)$, it follows that $h^*|_P=h|_P$ and $\ww(h^*)\subseteq w$. This completes the proof.

  \end{proof}

\begin{remark} In the book \cite[Chapter 8]{schumaker}, there is a discussion on finding monotone spline functions that are differentiable in the two-dimensional case; these methods likely carry over to the n-dimensional case. 
\end{remark}

\begin{example}\rm
  Continuing with Example \ref{running:example}, \ref{running2}, let
  us find $W_D$. From this proposition, we want to find those local
  wiring diagrams which are in all three sets $W_{(p_i,p_j)}$.  One
  local wiring diagram that is in all three sets and therefore in
  $W_D$ is $w_1 = \{(x_2, -1)\}$.  Therefore any local wiring diagram
  which contains $(x_2, -1)$ is also in $W_D$.

  To find other local wiring diagrams $w$ in $W_D$, we may assume that $(x_2, -1)$ is
  not in $w$.  Since $w \in W_{(p_1,p_3)}$, either $(x_1,1) \in w$ or $(x_3,1) \in w$.
  So first suppose $(x_1,1) \in w$.  In this case, $w \in W_{(p_1,p_2)}$ only if $(x_3,1) \in w$.
  Since this will also imply that $w \in W_{(p_2,p_3)}$, we see that $w_2 = \{(x_1,1), (x_3,1)\}$
  is also in $W_D$.

  Finally, if we assume that $(x_2,-1) \not\in w$, and $(x_1,1) \not\in w$, then we would need
  $(x_3,-1) \in w$, but also we would need $(x_3, 1) \in w$, which cannot happen.  So there are
  no further minimal local wiring diagrams consistent with $D$, and therefore
  $W_D = \langle w_1, w_2 \rangle$.  

  In fact, $W_D$ consists of all wiring diagrams containing either $w_1$ or $w_2$.
  Therefore the minimal local wiring diagrams consistent with $D$ are $w_1$ and $w_2$.
  We often write $W_D = \langle w_1, w_2 \rangle$ to mean the set of all wiring diagrams
  containing either of these two elements.  There are 11 elements in $W_D$.
\end{example}

\begin{example}\rm
  Consider $h:[0,1]^2 \rightarrow [0,1]$ monotone and suppose that
  $h(.1,.1)=.2$, $h(.5,.3)=.4$, $h(.7,.2)=.6$ and $h(.8,.5)=.7$. It
  can be shown by inspection that $W_D=\{\{(x_1,1)\},
  \{(x_1,1),(x_2,1)\}, \{(x_1,1),(x_2,-1)\}\}$. Then, we see that
  there is a unique minimal element, $\{(x_1,1)\}$. By Theorem~\ref{thm:def},
  $\{(x_1,1)\}$ is the only minimal wiring diagram.
\end{example}

Theorem~\ref{thm:def} allows us to use numerical data and explore the
wiring diagram space without the need to explore the space of all
functions that could fit the data. In a sense, the theorem allows us
to work ``at the wiring diagram level''. 

\section{Algebraic approach for network reconstruction}\label{sec:alg}

This section extends the results for finite dynamical systems in
\cite{Veliz-Cuba2012} to discrete dynamical systems.  Let $D = (P, h|_P)$
be observed data.  Recall that $W_D$ is the set of local wiring diagrams
consistent with the data $D$.  In this section, we encode
$W_D$ algebraically with the help of Theorem~\ref{thm:def}.  The
problem of finding the minimal local wiring diagrams will be
transformed into a well known problem in computational algebra.

We now define three ideals in a polynomial ring $R$ in the variables $x_1, \ldots, x_n$.
The following definition encodes $W_{(p,p')}$ as an ideal of polynomials.
The intuition behind this definition is that for a local wiring diagram $W$
to be consistent with $(p,p'), h(p) < h(p')$, $W$ has to contain
$(x_i,\sgn(p'_i-p_i))$ for some $i$. This is formalized in the
following proposition.

\begin{definition}\rm
  Let $w$ be a local wiring diagram. Define the ideal
  \[
  \mathcal{I}_w:=\left\langle \{
  x_i-s_i| (x_i,s_i)\in w \} \right\rangle.
  \]

  This is a prime ideal generated by several linear polynomials.
\end{definition}

\begin{definition}\rm
Let $(p,p')\in P^2$ such that $h(p)<h(p')$. Define the ideal
\[
\mathcal{I}_{(p,p')} := \left\langle \prod_{p_i\neq p_i'} (x_i-\sgn(p'_i-p_i)) \right\rangle.
\]
This is an ideal generated by a single polynomial which is a product of linear polynomials.
\end{definition}

\begin{definition}\rm
  Let $D = (P, h|_P)$ be observed data.  Define the ideal
  \[\mathcal{I}_D:=\sum_{\substack{(p,p')\in P^2 \\ h(p)<h(p')}}  \mathcal{I}_{(p,p')}.
  \]

  This is an ideal generated by a number of nonlinear polynomials, each is a product of
  linear polynomials.
\end{definition}

\begin{proposition}\label{prop:wd_ideal}
A local wiring diagram $w$ is consistent with the pair $(p,p')$ (where $h(p) < h(p')$)
if and only if $\mathcal{I}_{(p,p')}\subseteq \mathcal{I}_w$. Furthermore, $w\in W_D$ if and only if $\mathcal{I}_D\subseteq \mathcal{I}_w$
\end{proposition}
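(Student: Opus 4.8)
The plan is to establish the two equivalences in sequence, deducing the second from the first. The first equivalence is essentially a translation of set-theoretic consistency into membership in a prime ideal, so the whole argument reduces to understanding which polynomials lie in $\mathcal{I}_w$. Since each $\mathcal{I}_w$ is generated by the linear forms $x_j - s_j$ for $(x_j, s_j) \in w$, I would first record the key structural fact: the quotient $R/\mathcal{I}_w$ is isomorphic to the polynomial ring in the variables not appearing in $w$, via the map sending each ``bound'' variable $x_j$ to its value $s_j$ and fixing the remaining variables. Under this identification a linear polynomial $x_i - c$ maps to $0$ precisely when $x_i$ is bound in $w$ with value $s_i = c$; in particular $x_i - c \in \mathcal{I}_w$ iff $(x_i, c) \in w$. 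This is where I would use that $\mathcal{I}_w$ is prime and that each variable appears at most once in $w$.

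For the first equivalence, note that $\mathcal{I}_{(p,p')}$ is principal, so $\mathcal{I}_{(p,p')} \subseteq \mathcal{I}_w$ holds iff its single generator $\prod_{p_i \neq p'_i}(x_i - \sgn(p'_i - p_i))$ lies in $\mathcal{I}_w$. Because $\mathcal{I}_w$ is prime, a product lies in it iff some factor does. By the structural fact above, the factor $x_i - \sgn(p'_i - p_i)$ lies in $\mathcal{I}_w$ exactly when $(x_i, \sgn(p'_i - p_i)) \in w$; here the restriction $p_i \neq p'_i$ in the product guarantees $\sgn(p'_i - p_i) \in \{1,-1\}$, matching the signs allowed in a local wiring diagram. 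Thus the generator lies in $\mathcal{I}_w$ iff there is some index $i$ with $p_i \neq p'_i$ and $(x_i, \sgn(p'_i - p_i)) \in w$, which is exactly the definition of $w$ being consistent with $(p,p')$.

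For the second equivalence I would chain three facts. By Theorem~\ref{thm:def}, $w \in W_D$ iff $w \in W_{(p,p')}$ for every pair with $h(p) < h(p')$. By the first equivalence just proved, this holds iff $\mathcal{I}_{(p,p')} \subseteq \mathcal{I}_w$ for all such pairs. Finally, since $\mathcal{I}_D = \sum \mathcal{I}_{(p,p')}$ and a sum of ideals is contained in a fixed ideal iff each summand is, the last condition is equivalent to $\mathcal{I}_D \subseteq \mathcal{I}_w$. I expect the main obstacle to be the structural lemma about membership of linear forms in $\mathcal{I}_w$: once that is cleanly stated and proved (using primality and the fact that no variable is repeated in $w$), the remainder of the argument is purely formal bookkeeping with principal ideals, primality, and sums of ideals.
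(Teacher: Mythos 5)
Your proposal is correct and follows essentially the same route as the paper: use primality of $\mathcal{I}_w$ to reduce membership of the principal generator to membership of one of its linear factors, identify when a linear form $x_i - c$ lies in $\mathcal{I}_w$, and then obtain the second equivalence by combining Theorem~\ref{thm:def} with the fact that a sum of ideals is contained in $\mathcal{I}_w$ iff each summand is. The only difference is that you justify the step ``$x_i - c \in \mathcal{I}_w$ iff $(x_i,c)\in w$'' explicitly via the quotient $R/\mathcal{I}_w$, which the paper asserts without proof; this is a welcome elaboration, not a different argument.
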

\begin{proof}
First, suppose $w$ is consistent with $(p,p')$. Then, there is $j$
such that $(x_j,\sgn(p'_j-p_j))\in w$. Then
$x_j-\sgn(p'_j-p_j)$ is one of the generators of
$\mathcal{I}_w$ and $\langle x_j-\sgn(p'_j-p_j)\rangle
\subseteq \mathcal{I}_w$. Since $x_j-\sgn(p'_j-p_j)$ is a
factor of the generator of $\mathcal{I}_{(p,p')}$,
$\mathcal{I}_{(p,p')} \subseteq \langle
x_j-\sgn(p'_j-p_j)\rangle$. Thus
$\mathcal{I}_{(p,p')}\subseteq \mathcal{I}_w$.

Now, suppose $\mathcal{I}_{(p,p')}\subseteq \mathcal{I}_w$. Since
$\prod_{p_i\neq p_i'} (x_i-\sgn(p'_i-p_i))$ is in $\mathcal{I}_w$,
which is a prime ideal, one of the factors of the polynomial must be
in $\mathcal{I}_w$. Then, $x_j-\sgn(p'_j-p_j)\in
\mathcal{I}_w$ for some $j$ and so
$(x_j,\sgn(p'_j-p_j))\in w$. Thus, $w$ is consistent with
$(p,p')$.

To prove the second part of the proposition, note that $w\in W_D$ if and only if $w\in W_{(p,p')}$ for all pairs $(p,p')$ such that $h(p)<h(p')$, if and only if $\mathcal{I}_{(p,p')}\subseteq \mathcal{I}_w$ for all pairs $(p,p')$ such that $h(p)<h(p')$, if and only if $\mathcal{I}_D\subseteq \mathcal{I}_w$.
\end{proof}

The following proposition show us how to encode algebraically all
wiring diagrams that are consistent with $D$ and how to find the
minimal wiring diagrams algebraically.

\begin{proposition}\label{prop:minimal}
  Let $D = (P, h|_P)$ be observed data and consider $w$ to be a local wiring diagram. Then, 
 $w$ is a minimal local wiring diagram of $W_D$ if and only if $\mathcal{I}_w$ is a minimal
prime of $\mathcal{I}_D$.
\end{proposition}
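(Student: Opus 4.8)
The plan is to exploit \cref{prop:wd_ideal}, which already identifies $W_D$ with the collection of prime ideals $\mathcal{I}_w$ that contain $\mathcal{I}_D$, and to promote this membership statement to an \emph{order} statement, so that ``minimal element of $W_D$'' and ``minimal prime of $\mathcal{I}_D$'' become the same condition. The first ingredient is an order lemma: for local wiring diagrams $w,w'$ one has $\mathcal{I}_{w'}\subseteq\mathcal{I}_w$ if and only if $w'\subseteq w$. Its nontrivial half reduces to the claim that a linear form $x_i-s$ with $s\in\{1,-1\}$ lies in $\mathcal{I}_w$ if and only if $(x_i,s)\in w$. To see this I would pass to the domain $R/\mathcal{I}_w$, in which $x_j\equiv s_j$ for $(x_j,s_j)\in w$ while every unconstrained variable stays transcendental; the class of $x_i-s$ is then zero exactly when $w$ fixes $x_i$ to the value $s$, that is, when $(x_i,s)\in w$. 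Consequently $w\mapsto\mathcal{I}_w$ is injective and both preserves and reflects strict inclusions.

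The second, and main, step is to show that every minimal prime of $\mathcal{I}_D$ has the form $\mathcal{I}_w$ for some $w\in W_D$. Given any proper prime $\mathfrak{p}\supseteq\mathcal{I}_D$, I would set $C(\mathfrak{p}):=\{(x_i,s) : s\in\{1,-1\},\ x_i-s\in\mathfrak{p}\}$. For each generator $\prod_{p_i\ne p_i'}(x_i-\sgn(p_i'-p_i))$ of $\mathcal{I}_D$, primality of $\mathfrak{p}$ forces one factor into $\mathfrak{p}$, so $C(\mathfrak{p})$ meets every pair $(p,p')$ and is therefore consistent with the data. The crucial point is that $C(\mathfrak{p})$ is a \emph{legitimate} local wiring diagram, i.e. no variable occurs with both signs: were $x_i-1$ and $x_i+1$ both in $\mathfrak{p}$, their difference $-2$ would force $\mathfrak{p}=R$, contradicting that $\mathfrak{p}$ is a proper prime. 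Hence $w:=C(\mathfrak{p})\in W_D$, and $\mathcal{I}_w\subseteq\mathfrak{p}$ by construction, with $\mathcal{I}_w$ itself prime and containing $\mathcal{I}_D$. When $\mathfrak{p}$ is a minimal prime this squeezes $\mathcal{I}_w=\mathfrak{p}$.

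With these two lemmas the proposition follows formally in both directions. If $w$ is minimal in $W_D$ and $\mathcal{I}_w$ failed to be a minimal prime, a prime strictly between $\mathcal{I}_D$ and $\mathcal{I}_w$ would, by the extraction step, contain some $\mathcal{I}_{w'}$ with $w'\in W_D$ and $\mathcal{I}_{w'}\subsetneq\mathcal{I}_w$; the order lemma then yields $w'\subsetneq w$, contradicting minimality of $w$. Conversely, if $\mathcal{I}_w$ is a minimal prime then $\mathcal{I}_D\subseteq\mathcal{I}_w$ gives $w\in W_D$ via \cref{prop:wd_ideal}, and any $w'\subsetneq w$ in $W_D$ would produce a prime $\mathcal{I}_{w'}\subsetneq\mathcal{I}_w$ containing $\mathcal{I}_D$, again contradicting minimality. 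I expect the main obstacle to be the extraction step, and specifically its consistency argument: it is exactly here that the special shape of the generators -- products of the forms $x_i\pm1$, two per variable and differing by the invertible constant $2$ -- is needed to guarantee that the minimal primes are indexed by honest wiring diagrams rather than by inconsistent hitting sets.
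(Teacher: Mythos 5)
Your proof is correct and follows essentially the same route as the paper's: both reduce the statement to the fact that the minimal primes of $\mathcal{I}_D$ are exactly the ideals $\mathcal{I}_{w'}$ with $w'\in W_D$, combined with the order correspondence $\mathcal{I}_{w'}\subseteq\mathcal{I}_w \iff w'\subseteq w$. The only difference is that the paper asserts these two facts without argument, whereas you actually establish them (the extraction of $C(\mathfrak{p})$ from a prime $\mathfrak{p}\supseteq\mathcal{I}_D$, using primality and the unit $2$ to rule out inconsistent sign sets, and the quotient-ring argument for the order lemma), so your write-up is if anything more complete than the paper's.
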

\begin{proof}
First, note that since $\mathcal{I}_D$ is generated by products of $(x_i\pm 1)$, then
its minimal primes are of the form $\mathcal{I}_{w'}$ for some local wiring diagram $w'$, which by Proposition \ref{prop:wd_ideal} must be in $W_D$. We now proceed with the proof.

Suppose $w\in W_D$ is a minimal local wiring diagram. Since $\mathcal{I}_D  \subseteq \mathcal{I}_w$, then $\mathcal{I}_w$ must contain one of the minimal primes of $\mathcal{I}_D$, which will be of the form $\mathcal{I}_{w'}$ for some local wiring diagram $w'\in W_D$. Since $\mathcal{I}_{w'}\subseteq \mathcal{I}_{w}$, it follows that $w'\subseteq w$. We assumed $w$ is minimal, so $w'=w$ and $\mathcal{I}_{w'}=\mathcal{I}_w$. Thus, $\mathcal{I}_w$ is a minimal prime of $\mathcal{I}_D$.

Now consider a minimal prime of $\mathcal{I}_D$,  $\mathcal{I}_w$. Then, by Proposition \ref{prop:wd_ideal} $w\in W_D$. Denote with $w'$ the minimal local wiring diagram that is contained in $w$. Then, $\mathcal{I}_D\subseteq \mathcal{I}_{w'}\subseteq \mathcal{I}_w$, and since $\mathcal{I}_w$ is minimal we obtain $\mathcal{I}_{w'}= \mathcal{I}_w$. Thus $w'=w$ and $w$ is a minimal local wiring diagram.
\end{proof}

\begin{theorem}\label{thm:prob1}
Consider a monotone function $h:[0,1]^n\rightarrow [0,1]$ and suppose
we obtain data, $D$ by sampling points in $[0,1]^n$
using a uniform distribution. Then, with probability 1,
$W_{D}$ will eventually have
$\ww(h)$ as its unique minimal wiring diagram. Equivalently,
with probability 1, $\mathcal{I}_{D}$ will eventually be equal
to $\mathcal{I}_{\ww(h)}$.

\end{theorem}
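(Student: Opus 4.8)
Write $w^\ast := \ww(h)$ and let $S$ be the set of variables on which $h$ genuinely depends, with signs $s^\ast_i$, so that $w^\ast = \{(x_i, s^\ast_i) : i \in S\}$. The plan is to argue at the wiring-diagram level and then translate back to ideals. First I would record two standing facts. The function $h$ itself witnesses $w^\ast \in W_D$ for every data set $D$ (take $h^\ast = h$ in \cref{def:minimal}), so by \cref{prop:wd_ideal} we always have $\mathcal{I}_D \subseteq \mathcal{I}_{w^\ast}$; and adding sample points only enlarges $\mathcal{I}_D$, equivalently shrinks $W_D$. Since there are only finitely many (namely $3^n$) local wiring diagrams, it suffices to prove that, almost surely, after finitely many samples every diagram $w$ with $w \not\supseteq w^\ast$ has been expelled from $W_D$: once that happens, $W_D$ is exactly the set of diagrams containing $w^\ast$, so $w^\ast$ is its unique minimal element.

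Fix one such $w = \{(x_i, s_i) : i \in T\}$ with $w \not\supseteq w^\ast$, and choose $j \in S$ with $(x_j, s^\ast_j) \notin w$; assume without loss of generality $s^\ast_j = +1$. The core step is to show that a single random ordered pair $(p, p')$ of independent uniform points is an \emph{eliminating pair} for $w$ (meaning $h(p) < h(p')$ yet $w \notin W_{(p,p')}$) with some fixed positive probability $\epsilon_w > 0$. Because $h$ is monotone increasing in $x_j$ and not independent of $x_j$, and since every monotone function here is continuous, I can pick an interior point $c$ and values $a < b$ with $h(c^{(a)}) < h(c^{(b)})$, where $c^{(t)}$ denotes $c$ with its $j$-th coordinate reset to $t$. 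Choosing $\delta$ small, for all $p$ within $\delta$ of $c^{(a)}$ and all $p'$ within $\delta$ of $c^{(b)}$ one still has $h(p) < h(p')$. Inside these two balls I then force, for every $i \in T$, the displacement $p'_i - p_i$ to have sign $-s_i$: for $i \ne j$ this is a tiny perturbation around $c_i$ that leaves the output inequality intact, while for $i = j$ (the wrong-sign case) it is automatic since $p'_j > p_j$. This carves out a nonempty open set $U \subseteq ([0,1]^n)^2$, and every $(p,p') \in U$ satisfies $h(p) < h(p')$ while $\sgn(p'_i - p_i) = -s_i \ne s_i$ for all $i \in T$, so no coordinate explains the increase and $w \notin W_{(p,p')}$. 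Thus $U$ has positive Lebesgue measure and $\epsilon_w > 0$.

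With positivity in hand the probabilistic conclusion is routine: applying the estimate to the independent disjoint pairs $(p_1, p_2), (p_3, p_4), \ldots$, the second Borel--Cantelli lemma shows that almost surely some such pair is an eliminating pair for $w$, after which $w \notin W_D$ permanently. Taking the finite union over all $w \not\supseteq w^\ast$ yields an almost surely finite random time past which $W_D$ has $w^\ast$ as its unique minimal element. By \cref{prop:minimal} this says $\mathcal{I}_{w^\ast}$ is the unique minimal prime of $\mathcal{I}_D$; since $\mathcal{I}_D$ is a squarefree pseudomonomial ideal it is radical and therefore equals the intersection of its minimal primes, i.e.\ $\mathcal{I}_D = \mathcal{I}_{w^\ast}$, giving the equivalent algebraic statement.

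I expect the main obstacle to be exactly the open-set construction, precisely because uniform sampling almost never returns two points agreeing in any coordinate: one cannot isolate the effect of a single variable, and must instead move \emph{all} of the coordinates in $T$ the ``wrong'' way at once while still provoking a net increase in $h$. Making these forced displacements harmless --- and correctly separating the case $x_j \notin w$ from the case $x_j \in w$ with the wrong sign, where the forced move at $j$ itself drives the increase --- is the delicate part, and it is what forces the use of continuity together with an interior strict-increase segment rather than a purely combinatorial argument. A secondary technical point is upgrading ``unique minimal prime'' to the exact equality $\mathcal{I}_D = \mathcal{I}_{w^\ast}$, which relies on radicality of $\mathcal{I}_D$; concretely, on both signs of each spectator coordinate eventually appearing, so that the nonlinear generators cancel down to the linear forms $x_k - s^\ast_k$.
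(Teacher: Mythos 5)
Your argument is correct, but it proves the theorem by a genuinely different route than the paper. You work combinatorially at the level of $W_D$: since there are only finitely many local wiring diagrams, you expel each $w\not\supseteq \ww(h)$ by exhibiting a positive-probability open set of ordered pairs $(p,p')$ with $h(p)<h(p')$ for which every coordinate in the support of $w$ moves against its sign in $w$, and then invoke Borel--Cantelli. The paper instead proves the ideal identity $\mathcal{I}_D=\mathcal{I}_{\ww(h)}$ directly by two inclusions: the inclusion $\mathcal{I}_D\subseteq\mathcal{I}_{\ww(h)}$ always holds (you note this too), and for the reverse inclusion it constructs, for each variable $x_j$ that $h$ depends on, a cluster of $2^{n-1}+1$ points $p, p^s=\bar p+\delta s$ realizing \emph{all} sign patterns $s$ in the remaining coordinates, so that the sum of the principal ideals $\mathcal{I}_{(p,p^s)}=\langle (x_j-1)\prod_{i\ne j}(x_i-s_i)\rangle$ collapses to the linear ideal $\langle x_j-1\rangle$; hitting neighborhoods of all these points happens almost surely. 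The trade-off is precisely the one you flag at the end: the paper's construction yields the exact equality of ideals with no commutative-algebra input beyond \cref{prop:wd_ideal}, whereas your route lands on ``$\mathcal{I}_{\ww(h)}$ is the unique minimal prime of $\mathcal{I}_D$'' and must then invoke radicality of $\mathcal{I}_D$ to upgrade this to equality. That radicality claim is true (generators are squarefree pseudomonomials, and such ideals are known to be radical, e.g.\ from the neural-ideal literature), but it is nowhere established in this paper, so you are importing an external fact; also, your closing sentence conflates two separate issues, since radicality of $\mathcal{I}_D$ holds for any data set and has nothing to do with ``both signs of each spectator coordinate eventually appearing''---the latter is exactly the mechanism of the paper's cancellation argument, which is the alternative to using radicality, not a prerequisite for it. Your approach is arguably more economical probabilistically (one eliminating pair per bad diagram versus a coordinated cluster of $2^{n-1}+1$ points per relevant variable), and it isolates cleanly why monotonicity plus continuity forces elimination; the paper's approach buys a self-contained, purely constructive identification of the ideal itself.
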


\begin{proof}
If $h$ is constant, then $\mathcal{I}_{w(h)}=\mathcal{I}_{\{\}}=\langle 0\rangle=\mathcal{I}_D$ for all observed data $D$.\\ 
If $h$ is not constant, without loss of generality we assume that $w(h)=\{(x_1,1),\ldots,(x_k,1)\}$.  Since $h$ is increasing (and not constant) with respect to $x_1$, there exists $p,\bar{p}\in [0,1]^n$ such that $p_i=\bar{p}_i$ for $i\geq 2$ and $p_1<\bar{p}_1$ and $h(p)<h(\bar{p})$. If $p$ or $\bar{p}$ happen to be on the boundary of $[0,1]^n$, using continuity we can pick new $p,\bar{p}$ values that are not on the boundary. Now, consider $S=\{(0,s_2,\ldots,s_n):s_i\in \{-1,1\}\}$ and for any element $s\in S$ and for $\delta>0$ define $p^{s}=\bar{p}+\delta s$. Note that $p^{s}$ is simply $\bar{p}$ after modifying all entries but the first one according to the sign pattern given by $s$. By continuity, we can choose $\delta>0$ such that $h(p)<h(p^s)$ for all $s\in S$. Now, since $sign(p^s_1-p_1)=1$ and $sign(p^s_i-p_i)=s_i$ for $i\geq 2$, we obtain
\begin{align*}
\mathcal{I}_{(p,p^{s})}=\langle (x_1-1)(x-s_2)(x-s_3)\ldots (x_n-s_n) \rangle.
\end{align*}
If we denote $P_1=\{p\}\cup \{p^s: s\in S\}$ and $A_1:=\{ (x_1-1)(x_2-s_2)\ldots(x_n - s_n): s_j\in \{-1,1\}\}$, it follows that $\langle x_1-1 \rangle = \langle A_1\rangle\subseteq \mathcal{I}_{(P_1,h|_{P_1})}$. 
By continuity, we can find open sets $B_p$, $B_{p^s}$ such that $\langle x_1-1 \rangle \subseteq \mathcal{I}_{(P_1,h|_{P_1})}$ as long as one point of each open set is selected. If we sample points in $[0,1]^n$ uniformly, with probability 1 we will eventually sample points in these regions. Thus, with probability 1 we will eventually obtain $\langle x_1-1 \rangle \subseteq \mathcal{I}_D$. \\
The same argument shows that with probability 1 we will eventually obtain $\langle x_j-1 \rangle \subseteq \mathcal{I}_D$ for all $j=1,\ldots,k$ and thus $\langle x_1-1,\ldots, x_k-1 \rangle \subseteq \mathcal{I}_D$. 
The proof now follows from the fact that since $w(h)=\{(x_1,1),\ldots,(x_k,1)\}$, $\mathcal{I}_D\subseteq \langle x_1-1,\ldots, x_k-1 \rangle$
 for any observed data $D$. Indeed, if $p,p'$ satisfy $h(p)<h(p')$, then $p_i<p'_i$ for some $i=1,\ldots,k$. This implies that $x_i-1$ is one of the factors of the generator of $\mathcal{I}_{(p,p')}$, so $\mathcal{I}_{(p,p')}\subseteq \langle x_i-1\rangle \subseteq \langle x_1-1,\ldots, x_k-1 \rangle$. Thus, $\mathcal{I}_D\subseteq \langle x_1-1,\ldots, x_k-1 \rangle$.
The argument above can be modified to also work with functions not necessarily continuous. As long as the points of discontinuity form a set of measure zero. Examples of these functions include piecewise defined functions.
\end{proof}
We remark that the proof also works with an unbounded domain, such as $[0,\infty)^n$. Also, any distribution can be used as long as any open subset of the domain can be sampled with positive probability. 
Examples of distributions satisfying this are gamma, beta, log normal, truncated normal, etc.

\subsection{Example}
We consider the discrete dynamical system $f:[0,1]^5\rightarrow [0,1]^5$ given by the equations.
\begin{align*}
f_1 & = \frac{x_1}{1+x_1}\frac{1}{1+x_2^2}\\
f_2 & = \frac{1}{1+x_1x_2}\frac{1}{1+x_5}\\
f_3 & = \frac{x_1^2}{1+x_1^2}\frac{1}{1+x_2}\\
f_4 & = \frac{1}{1+x_2}\\
f_5 & = \frac{x_1}{1+x_1}\frac{x_2}{1+x_2}
\end{align*}

We sampled 30 points in $[0,1]^2$ uniformly at random and obtained Table \ref{table:eg5vars}. We then used Proposition \ref{prop:minimal} to compute the minimal wiring diagrams for each coordinate function of $f$. We did this using the data $D=(P,f|_P)$, for $|P|=20,22,24,26,28,30$ (starting at the top row of Table \ref{table:eg5vars}). The predicted wiring diagrams are shown in Figure \ref{fig:example5vars} where we can see that eventually the predicted wiring diagram coincides with the true wiring diagram.

\begin{table}[h]
\caption{Example of a data set.}
\begin{center}
\begin{tabular}{|c|lllll||lllll|}
\hline
 $P$ &
& \multicolumn{3}{c}{$x$} & & & \multicolumn{3}{c}{$f(x)$}   & \\ \hline
1& .75 & .30 & .17 & .90 & .70 & .39 & .48 & .28 & .77 & .10\\ \hline 
2& .69 & .98 & .71 & .20 & .31 & .21 & .46 & .16 & .51 & .20\\ \hline 
3&.99 & .50 & .31 & .98 & .97 & .40 & .34 & .33 & .67 & .17\\ \hline 
4&.96 & .75 & .04 & .94 & .25 & .31 & .47 & .27 & .57 & .21\\ \hline 
5&.30 & .16 & .26 & .18 & .66 & .23 & .57 & .07 & .86 & .03\\ \hline 
6&.18 & .53 & .15 & .22 & .28 & .12 & .71 & .02 & .65 & .05\\ \hline 
7&.58 & .05 & .62 & .27 & .88 & .37 & .52 & .24 & .95 & .02\\ \hline 
8&.25 & .65 & .06 & .09 & .75 & .14 & .49 & .04 & .61 & .08\\ \hline 
9&.43 & .10 & .73 & .90 & .61 & .30 & .60 & .14 & .91 & .03\\ \hline 
10&.08 & .79 & .79 & .89 & .52 & .05 & .62 & 0.0 & .56 & .03\\ \hline 
11&.17 & .99 & .04 & .28 & .73 & .07 & .49 & .01 & .50 & .07\\ \hline 
12&.70 & .54 & .52 & .63 & .62 & .32 & .45 & .21 & .65 & .14\\ \hline 
13&.57 & .20 & .75 & .22 & .05 & .35 & .85 & .20 & .83 & .06\\ \hline 
14&.73 & .51 & .25 & .48 & .93 & .33 & .38 & .23 & .66 & .14\\ \hline 
15&.20 & .10 & .77 & .05 & .61 & .17 & .61 & .03 & .91 & .02\\ \hline 
16&.65 & .79 & .40 & .85 & .48 & .24 & .45 & .17 & .56 & .17\\ \hline 
17&.34 & .99 & .50 & .58 & .64 & .13 & .46 & .05 & .50 & .13\\ \hline 
18&.92 & .64 & .65 & .71 & .39 & .34 & .45 & .28 & .61 & .19\\ \hline 
19&.53 & .53 & .43 & .54 & .79 & .27 & .44 & .14 & .65 & .12\\ \hline 
20&.47 & .33 & .78 & .58 & .07 & .29 & .81 & .14 & .75 & .08\\ \hline 
21&.14 & .42 & .61 & .65 & .96 & .10 & .48 & .01 & .70 & .04\\ \hline 
22&.49 & .32 & .66 & .48 & .74 & .30 & .50 & .15 & .76 & .08\\ \hline 
23&.78 & .83 & .18 & .50 & .66 & .26 & .37 & .21 & .55 & .20\\ \hline 
24&.85 & .55 & 1.0 & .97 & .93 & .35 & .35 & .27 & .65 & .16\\ \hline 
25&.36 & .84 & .78 & .43 & .66 & .16 & .46 & .06 & .54 & .12\\ \hline 
26&.09 & .23 & .24 & .27 & .39 & .08 & .70 & .01 & .81 & .02\\ \hline 
27&.95 & .74 & .70 & .12 & .18 & .31 & .50 & .27 & .57 & .21\\ \hline 
28&.67 & .84 & .50 & .06 & .47 & .24 & .44 & .17 & .54 & .18\\ \hline 
29&.28 & .76 & .38 & .14 & .03 & .14 & .80 & .04 & .57 & .09\\ \hline 
30&.57 & .97 & .87 & .28 & .64 & .19 & .39 & .12 & .51 & .18\\ \hline 
\end{tabular} 
\end{center}
\label{table:eg5vars}
\end{table}

\begin{figure}[htbp]
  \centering
  \includegraphics[width=4in]{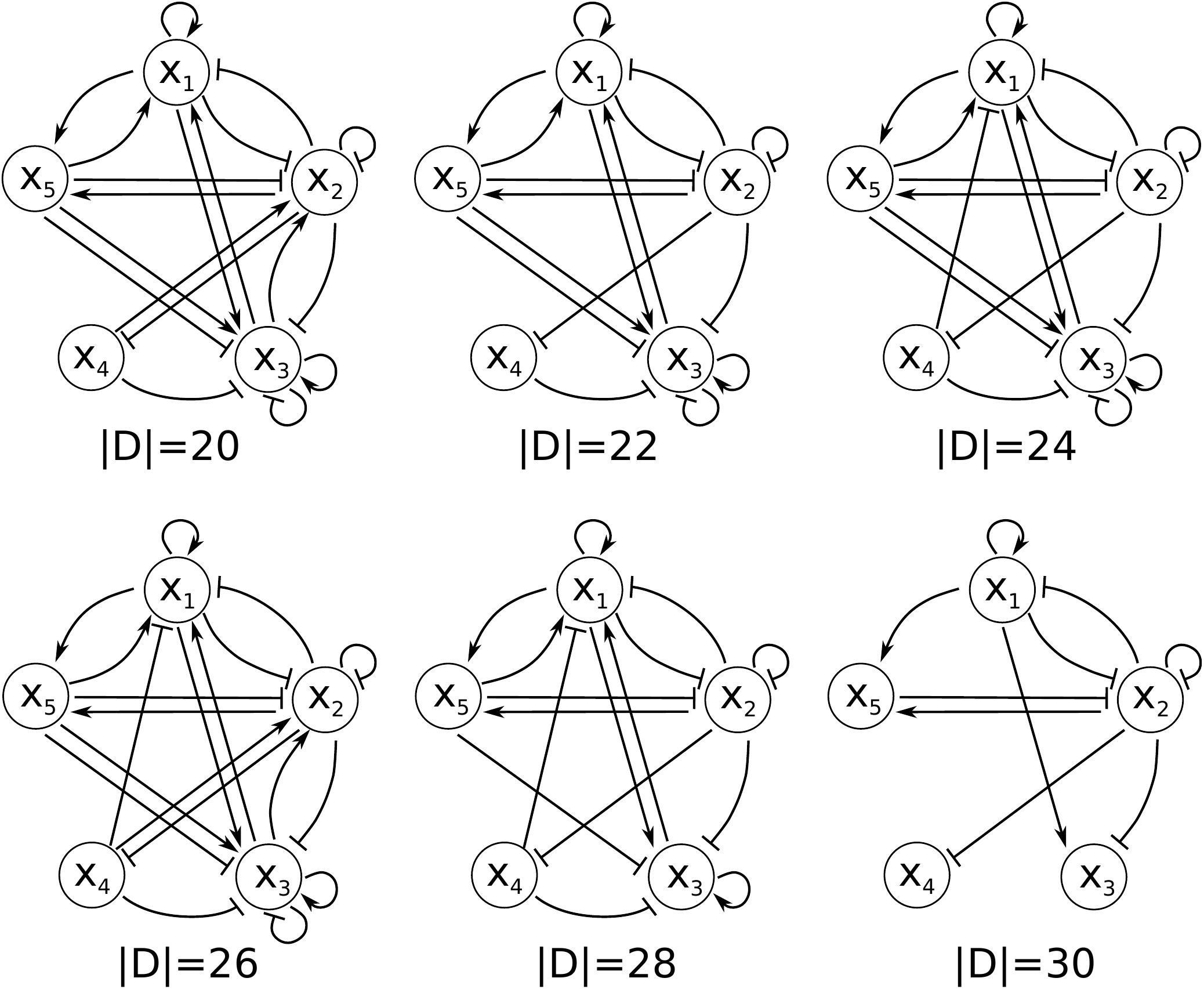}
  \caption{Effect of increasing the number of data points using Table \ref{table:eg5vars}. An edge is drawn if it shows up in at least one minimal wiring diagram. An arrow ($\rightarrow$) indicates a positive sign representing activation and a hammerhead ($\inhib$) indicates a negative sign representing inhibition. The true wiring diagram is reached when $|D|=30$.}
  \label{fig:example5vars}
\end{figure}


\subsection{Flour beetle models}
\begin{example}\rm

We consider the example from page 83 of \cite{Chavez} modeling flour beetle populations.
There are three stages of the beetle life cycle, L (larval), P (pupal), and A (adult).
We assume that the unit of time for the discrete time model describing these populations is 2 weeks.
The linear model is given by the following equations.
\begin{align*}
L_{n+1} & = b A_n \\
P_{n+1} & = (1 - \mu_L) L_n \\
A_{n+1} & = (1 - \mu_P)P_n + (1-\mu_A)A_n
\end{align*}
where the $\mu_i$ is the death rate of stage $i$, and $b$ is the larval recruitment rate per adult in unit time.
Often, $\mu_P$ is taken to be 0.  We instead let it be a small value, e.g. $\mu_P = .003$. The other parameter values
we fix as in the book: $b = 7$, $\mu_L = .2$, $\mu_A = .01$.

We simulate data by choosing 10 sets of random initial conditions in the range $[0,1]^3$. 
For each initial condition,
we simulate the model and save time steps $1, \ldots, 5$, obtaining 50 total observations.  We get 3 ideals,
for $L$, the ideal is $(A-1)$, for $P$, the ideal is $(L-1)$, and for $A$, the ideal is $(A-1, P-1)$.  We repeated this computation  1000 times and always obtained these same ideals, which is exactly what one expects.
\end{example}

\begin{example}\rm
Now we consider the extension of the model to include cannibalism, introduced via nonlinear terms, which is given by the following equations:
\begin{align*}
L_{n+1} & = b A_n e^{-c_{EA}A_n} e^{-c_{EL}L_n} \\
P_{n+1} & = (1 - \mu_L) L_n \\
A_{n+1} & = P_n e^{-c_{PA}A_n} + (1-\mu_A)A_n
\end{align*}
Note that the equation $L_{n+1}$ is not monotone in A.  It is not clear what the correct wiring diagram for the governing equations should be since $A$ could be activating or inhibiting $L$, depending on the value of the data. As expected for a non-monotone equation, we do not recover a consistent wiring diagram for this variable. 
Due to the non-monotonicity, different values of the data will result in different wiring diagrams. We remark that for continuous dynamical systems, such paradoxical results have been observed for reconstructing wiring diagrams with different total concentrations (i.e., perturbations to the initial conditions) and relates to biological retroactivity \cite{paradox}.

\end{example}

\section{Network Reconstruction With Noise}
\label{sec:main}

Now we consider the case of imperfect data. We consider $h:[0,1]^n\rightarrow [0,1]$ and assume that due to stochasticity we sample points $(p , h(p)+\eta)$, where $\eta$ denotes an  unknown noise value. We consider the case of bounded noise with bound $\epsilon_{out}$. That is, $|\eta_i|\leq \epsilon_{out} $. Also, we assume $\eta$ satisfies the condition $\rm{Prob}(\eta \in [-r,r])>0$ for all $r>0$. Examples of distributions that satisfy this condition include uniform, gamma, beta, normal, log normal. We denote with $\tilde{h}$ the function that includes the noise.

\begin{definition}
Suppose we have noisy data $D$ such that the noise bound is given by $\epsilon_{out}$. Define the ideal $\mathcal{I}_{\epsilon_{out},D}=
\sum_{\substack{(p,p')\in P^2 \\ \tilde{h}(p')-\tilde{h}(p)>2\epsilon_{out}} } \mathcal{I}_{p,p'}$.
\end{definition}
 
\begin{theorem}\label{thm:noise-prob1}
Consider a monotone function $h:[0,1]^n\rightarrow [0,1]$ and suppose we obtain noisy data, $D$ by sampling points in $[0,1]^n$
using a uniform distribution. If $\epsilon_{out}$ is low enough, 
$W_{D}$ will eventually have
$\ww(h)$ as its unique minimal wiring diagram with probability 1. Equivalently,
with probability 1, $\mathcal{I}_{\epsilon_{out},D}$ will eventually be equal
to $\mathcal{I}_{\ww(h)}$.
\end{theorem}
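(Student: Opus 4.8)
The plan is to mirror the structure of the proof of Theorem~\ref{thm:prob1}, establishing the two containments $\mathcal{I}_{\epsilon_{out},D}\subseteq \mathcal{I}_{\ww(h)}$ (for every sample) and $\mathcal{I}_{\ww(h)}\subseteq \mathcal{I}_{\epsilon_{out},D}$ (eventually, with probability~$1$), and then invoking the same prime-decomposition argument as in Proposition~\ref{prop:minimal} to conclude that $\ww(h)$ is the unique minimal wiring diagram. As before I assume without loss of generality that $\ww(h)=\{(x_1,1),\ldots,(x_k,1)\}$; the constant case $k=0$ is immediate, since no pair can satisfy $\tilde h(p')-\tilde h(p)>2\epsilon_{out}$, whence $\mathcal{I}_{\epsilon_{out},D}=\langle 0\rangle=\mathcal{I}_{\ww(h)}$ for every sample.

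The first step, and the reason the threshold in the definition of $\mathcal{I}_{\epsilon_{out},D}$ is $2\epsilon_{out}$, is that the threshold admits no spurious pairs. If $\tilde h(p')-\tilde h(p)>2\epsilon_{out}$, then writing $\tilde h(p)=h(p)+\eta$ and $\tilde h(p')=h(p')+\eta'$ with $|\eta|,|\eta'|\le\epsilon_{out}$ gives $h(p')-h(p)=(\tilde h(p')-\tilde h(p))-(\eta'-\eta)>2\epsilon_{out}-2\epsilon_{out}=0$. Hence every pair contributing to $\mathcal{I}_{\epsilon_{out},D}$ is a genuine increasing pair, $h(p)<h(p')$, and the monotonicity argument of Theorem~\ref{thm:prob1} applies verbatim: some $i\le k$ has $p_i<p'_i$, so $x_i-1$ divides the generator of $\mathcal{I}_{(p,p')}$ and $\mathcal{I}_{(p,p')}\subseteq\langle x_i-1\rangle\subseteq\mathcal{I}_{\ww(h)}$. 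Summing yields $\mathcal{I}_{\epsilon_{out},D}\subseteq\mathcal{I}_{\ww(h)}$ deterministically, for any noisy data.

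For the reverse containment I would strengthen the geometric construction of Theorem~\ref{thm:prob1} so that the gaps it produces are large relative to the noise. Fix $j\le k$. Since $h$ is increasing but not independent of $x_j$, there exist points differing only in coordinate $j$ whose $h$-values differ by a fixed amount $\gamma_j>0$; this is exactly where ``$\epsilon_{out}$ low enough'' enters, and I would take $\epsilon_{out}<\frac18\min_{j\le k}\gamma_j$. Following Theorem~\ref{thm:prob1}, choose interior points $p,\bar p$ differing only in coordinate $j$ with $h(\bar p)-h(p)=\gamma_j>8\epsilon_{out}$, and for each sign pattern $s$ (with $s_j=0$, $s_i=\pm1$) set $p^{s}=\bar p+\delta s$; by continuity $\delta$ and neighborhoods $B_p,B_{p^{s}}$ can be chosen so that, for any sampled inputs $q\in B_p$ and $q^{s}\in B_{p^{s}}$, the sign of $q^{s}-q$ is the prescribed pattern and the \emph{true} gap satisfies $h(q^{s})-h(q)>4\epsilon_{out}$ (the margin $8$ versus $4$ absorbs the continuity loss). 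The crux is that a true gap exceeding $4\epsilon_{out}$ forces $\tilde h(q^{s})-\tilde h(q)>4\epsilon_{out}-2\epsilon_{out}=2\epsilon_{out}$ for \emph{every} admissible noise realization, so the pair is included in $\mathcal{I}_{\epsilon_{out},D}$ with certainty once these inputs are sampled. Each $B_p,B_{p^{s}}$ has positive measure, so under uniform sampling all of them are hit eventually with probability~$1$; exactly as in Theorem~\ref{thm:prob1}, the resulting generators $(x_j-1)\prod_{i\ne j}(x_i-s_i)$ over all $s$ generate $\langle x_j-1\rangle$, and ranging over $j=1,\ldots,k$ gives $\mathcal{I}_{\ww(h)}=\langle x_1-1,\ldots,x_k-1\rangle\subseteq\mathcal{I}_{\epsilon_{out},D}$.

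Combining the two containments gives $\mathcal{I}_{\epsilon_{out},D}=\mathcal{I}_{\ww(h)}$ eventually with probability~$1$; since $\mathcal{I}_{\ww(h)}$ is prime it is its own unique minimal prime, and the argument of Proposition~\ref{prop:minimal} then identifies $\ww(h)$ as the unique minimal wiring diagram. I expect the main obstacle to be the tension in the choice of $\epsilon_{out}$: the threshold must be at least $2\epsilon_{out}$ to exclude the spurious pairs created by noise (needed for the easy containment), yet small enough that genuinely increasing single-coordinate gaps of size $\gamma_j$ remain detectable above the threshold (needed for the eventual containment). Reconciling these is the entire content of ``$\epsilon_{out}$ low enough,'' and once robust gaps ($>4\epsilon_{out}$) are secured the probabilistic part collapses to the positive-measure sampling argument already used in Theorem~\ref{thm:prob1}. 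Only boundedness of the noise is essential to this argument; the hypothesis $\mathrm{Prob}(\eta\in[-r,r])>0$ serves to keep the noise model nondegenerate.
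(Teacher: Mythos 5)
Your proof is correct and follows essentially the same route as the paper's, which simply defers to the argument of Theorem~\ref{thm:prob1} after noting that $\tilde h(p')-\tilde h(p)>2\epsilon_{out}$ forces $h(p)<h(p')$. The only difference is in how ``$\epsilon_{out}$ low enough'' is instantiated: the paper takes $\epsilon_{out}\le\frac12\min_i M_i$ (with $M_i$ the maximal single-coordinate gap) and relies on the assumption $\mathrm{Prob}(\eta\in[-r,r])>0$ to get a positive probability of detecting each witness pair on a given draw, whereas your tighter bound $\epsilon_{out}<\frac18\min_j\gamma_j$ makes detection certain for every admissible noise realization, so your version needs only boundedness of the noise --- a minor but genuine tightening of the same argument.
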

\begin{proof}
The proof follows the same argument as the proof of Theorem \ref{thm:prob1}. We only need that with probability 1 we will still sample points such that $\tilde{h}(p)+2\epsilon_{out} <\tilde{h}(p')$ (note this implies $h(p) <h(p')$), which  will be the case if $\epsilon_{out}$ is small enough, such as when $\epsilon_{out} \leq \frac{1}{2} \min\{M_i: h \text{ depends on } x_i\}$, where $M_i:=\max \{ |h(p)-h(p')|: p_{1,\ldots,i-1,i+1,n}=p'_{1,\ldots,i-1,i+1,n}\text{ and } p_i\neq p'_i\}$.
\end{proof}

We demonstrate the results of Theorem \ref{thm:noise-prob1} with an example from fish population dynamics.

\begin{example} \rm
We consider the example from \cite{Townley2012,Eager2016} modelling dynamics of fish populations. There are 5 five stages of the fish cycle, which we denote by $A, B, C, D, E$. The fish cycle stages are ordered $A$ to $E$, corresponding to the initial and final stages. The nonlinear model is given by the following equations.
\begin{align*}
A_{n+1} & = \varphi(k_C C_n + k_D D_n + k_E E_n) \\
B_{n+1} & = s_A A_n   \\
C_{n+1} & = s_B B_n  \\
D_{n+1} & = s_C C_n  \\
E_{n+1} & = s_D D_n
\end{align*}
where the $s_X$'s represent the transition rate from one stage to the next, $\varphi(x)=\frac{Vx}{K+x}$ is a function that describes how fecundity of fish depends on density, and  $k_X$'s capture how each stage contributes to fecundity. Since $0\leq A_{n+1}\leq V$ (the maximum value of $\varphi$) and $0\leq s_X\leq 1$, it follows that the states of the discrete dynamical system will be in $[0,V]^5$ after one iteration. Therefore, we can consider the system to be defined on $[0,V]^5$, i.e., $f:[0,V]^5 \to [0,V]^5$.
The parameters used by the authors were $s_A=0.0131$, $s_B=0.8$, $s_C=0.7896$, $s_D=0.6728$, $k_C=2.2834$, $k_D=35.1099$, $k_E=277.6529$, $K=8$, $V=6$ \cite{Townley2012,Eager2016}.

We simulate data by choosing $N$ sets of random initial conditions in the range $[0,6]^5$.  For each initial condition,
we simulate the model by applying the function for one time step, obtaining $N$ total observations.  We consider measurement noise on the observations by assuming points have the form $(p, f(p) + \eta)$ where $\eta$ is an unknown noise vector in the bounded range $|\eta_i|\leq \epsilon_{out}$.
We think of the data $D$ as a list of pairs $(p, f(p)+\eta)$, where the value of $\eta$ ranges logarithmically between $.5 \times 10^{-9}$ to $.5\times 10^{-1}$. For a fixed $\epsilon_{out}$, we compute the ideal $\mathcal{I}_{\epsilon_{out},D}$ of variable $A$ and $B$, as given in Figure~\ref{fig:fish-output-error}. Notice that Figure~\ref{fig:fish-output-error}A requires more observations for perfect reconstruction than Figure~\ref{fig:fish-output-error}B, which is due to the complexity of the function and relative values $k_C \ll k_D \ll k_E$. 
\begin{figure}[htbp]
  \centering
  \includegraphics[width=\textwidth]{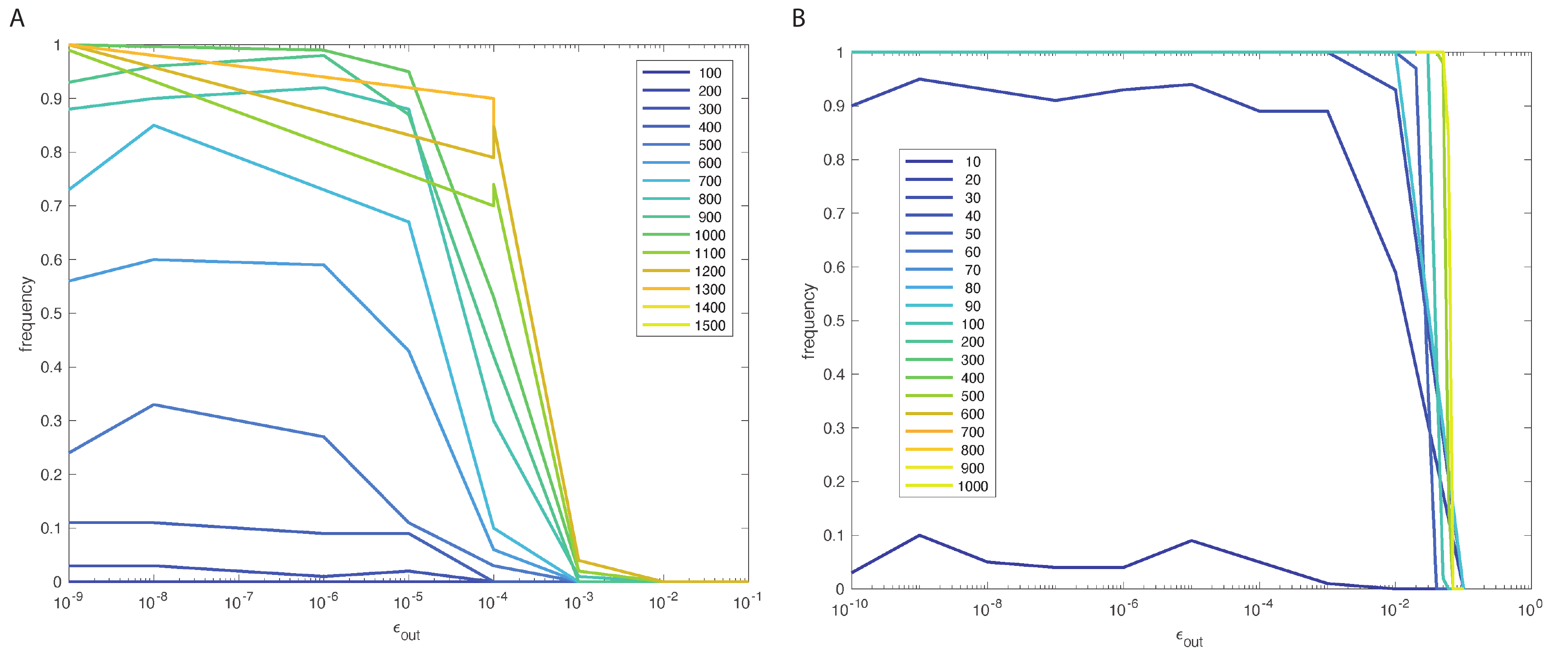}
  \caption{Frequency of correct wiring diagram from Example 4.3. $\epsilon_{out}$ varies logarithmically. (A) Frequency of correct local wiring diagram of variable A with number of observations ranging from 100 (dark blue) to 1500 (yellow). (B) Frequency of correct local wiring diagram of variable B with number of observations ranging from 10 (dark blue) to 1000 (yellow).}
  \label{fig:fish-output-error}
\end{figure}

\end{example}

Now we consider the case of imperfect data in the input and output. We still consider $h:[0,1]^n\rightarrow [0,1]$, but assume that instead of sampling points of the form $(p,h(p))$, we sample points $(p+\xi , h(p)+\eta)$, where $\xi$ and $\eta$ denote  unknown noise vector and value, respectively. The noise in $p$ can be considered measurement error, and the noise in $h(p)$ can be considered measurement error and stochastic noise. We consider the case of bounded noise with bounds $\epsilon_{in}$ for the input and $\epsilon_{out}$ for the output. That is, $|\xi_i|\leq \epsilon_{in}$ and $|\eta_i|\leq \epsilon_{out}$. We denote $\epsilon:=(\epsilon_{in},\epsilon_{out})$. We use $\tilde{}$ to denote quantities with a noise realization. So, we denote with $\tilde{h}$ the function that includes the noise and will denote with $\tilde{p}$ the vector $p+\xi$. Then, the data observed is denoted by $(\tilde{P},\tilde{h}|_P)$. Note that $P$ does not have noise in the second entry since the noise in $P$ is measurement noise.

We are interested in how the wiring diagram changes for observed data with limited precision.
We consider the imperfect observed data $D=(\tilde{P},\tilde{h}|_P)$. 
We now provide definitions that include potential measurement noise in the observed data $D$.  

\begin{definition}\rm
Fix $\epsilon = (\epsilon_{in}, \epsilon_{out})$, where $\epsilon_{in}, \epsilon_{out} \geq 0$. 
Let $(\tilde{p},\tilde{p}')\in P^2$ such that $\tilde{h}(p')-\tilde{h}(p)> 2\epsilon_{out}$. Define the ideal
\[
\mathcal{I}_{\epsilon,(\tilde{p},\tilde{p}')} := \left\langle \prod_{|\tilde{p}_i' -\tilde{p}_i|\ge 2\epsilon_{in},\, \tilde{p}_i \ne \tilde{p}'_i} (x_i-\sgn(\tilde{p}'_i-\tilde{p}_i))\prod_{|\tilde{p}_i' -\tilde{p}_i| < 2\epsilon_{in}}(x_i^2 - 1) \right\rangle.
\]
This ideal is generated by a single polynomial which is a product of linear polynomials.
\end{definition}

The intuition behind including the polynomials $x_i^2-1$ is that if $|p_i'-p_i|<2\epsilon_{in}$, we cannot be sure if the $i$-th input increased or decreased (since $p_i'$ and $p_i$ include a noise term).


\begin{definition}\rm
  Fix $\epsilon = (\epsilon_{in}, \epsilon_{out})$, where $\epsilon_{in}, \epsilon_{out} \geq 0$.   
  Let $D = (\tilde{P}, \tilde{h}|_P)$ be observed data.  Define the ideal
  \[\mathcal{I}_{\epsilon,D}:=\sum_{\substack{(\tilde{p},\tilde{p}')\in P^2 \\ \tilde{h}(p')-\tilde{h}(p)>2\epsilon_{out}}}  \mathcal{I}_{\epsilon,(\tilde{p},\tilde{p}')}.
  \]

  This is an ideal generated by a number of nonlinear polynomials, each is a product of
  linear polynomials.
\end{definition}


Note that if $\epsilon = (0.0, 0.0)$, then these ideals are the ones defined in Section~\ref{sec:alg}.
The following lemma describes how these ideals change as one varies the $\epsilon$.

\begin{lemma}
  (1) If $\epsilon_{in} = \epsilon_{in}'$, and $\epsilon_{out}' \le \epsilon_{out}$, then
  \[\mathcal{I}_{\epsilon_{in}, \epsilon_{out}', D} \supseteq \mathcal{I}_{\epsilon_{in}, \epsilon_{out}, D}.\]
  
  (2) If $\epsilon_{out} = \epsilon_{out}'$, and $\epsilon_{in}' \le \epsilon_{in}$, then
  \[ \mathcal{I}_{\epsilon_{in}', \epsilon_{out}, D} \supseteq \mathcal{I}_{\epsilon_{in}, \epsilon_{out}, D}. \]

  (3) Let $\epsilon = (\epsilon_{in}, \epsilon_{out})$ and $\epsilon' = (\epsilon_{in}', \epsilon_{out}')$.
  If $\epsilon_{out}' \ge \epsilon_{out}$ and $\epsilon_{in} \ge \epsilon_{in}'$ then
  \[ \mathcal{I}_{\epsilon',D} \supseteq \mathcal{I}_{\epsilon,D} \]
\end{lemma}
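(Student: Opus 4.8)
The plan is to prove the three parts in order and then obtain (3) by composing (1) and (2), so that the whole statement reduces to the fact that $\mathcal{I}_{\epsilon,D}$ is inclusion-reversing in $\epsilon$ under the componentwise order. Throughout I fix an admissible pair $(\tilde p,\tilde p')$ and write $g_{\epsilon}$ for the single generator of the principal ideal $\mathcal{I}_{\epsilon,(\tilde p,\tilde p')}$; the containments $\mathcal{I}_{\epsilon,D}\subseteq\mathcal{I}_{\epsilon',D}$ will in each case be deduced pairwise and then summed.

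For part (1) the key observation is that the generator $g_{(\epsilon_{in},\epsilon_{out})}$ depends on $\epsilon_{in}$ only: its two products are controlled by whether $|\tilde p_i'-\tilde p_i|\ge 2\epsilon_{in}$ or $<2\epsilon_{in}$, and $\epsilon_{out}$ never enters. Hence, with $\epsilon_{in}$ held fixed, every per-pair ideal $\mathcal{I}_{\epsilon_{in},\epsilon_{out},(\tilde p,\tilde p')}$ is literally unchanged when $\epsilon_{out}$ is replaced by $\epsilon_{out}'$. The only effect of lowering the output bound to $\epsilon_{out}'\le\epsilon_{out}$ is that the admissibility condition $\tilde h(p')-\tilde h(p)>2\epsilon_{out}'$ selects a (weakly) larger index set of pairs than $\tilde h(p')-\tilde h(p)>2\epsilon_{out}$. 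A sum of ideals over a larger index set contains the sum over any subset, which gives $\mathcal{I}_{\epsilon_{in},\epsilon_{out}',D}\supseteq\mathcal{I}_{\epsilon_{in},\epsilon_{out},D}$. This is the routine direction.

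Part (2) is the substantive step, and I expect the bookkeeping here to be the main obstacle. Now the set of admissible pairs is fixed (the output condition is untouched), so it suffices to show $g_{(\epsilon_{in}',\epsilon_{out})}\mid g_{(\epsilon_{in},\epsilon_{out})}$ for each pair and then sum. I would verify this divisibility factor by factor, using the one-variable fact that $x_i-s$ divides $x_i^2-1=(x_i-1)(x_i+1)$ for every sign $s\in\{-1,1\}$. As the threshold drops from $2\epsilon_{in}$ to $2\epsilon_{in}'$, an index with $|\tilde p_i'-\tilde p_i|\ge 2\epsilon_{in}$ or $|\tilde p_i'-\tilde p_i|<2\epsilon_{in}'$ contributes the identical factor to both generators; the only indices that move are those with $2\epsilon_{in}'\le|\tilde p_i'-\tilde p_i|<2\epsilon_{in}$, which contribute $x_i^2-1$ to $g_{(\epsilon_{in},\epsilon_{out})}$ but the linear factor $x_i-\sgn(\tilde p_i'-\tilde p_i)$ to $g_{(\epsilon_{in}',\epsilon_{out})}$, so the smaller-threshold factor divides the larger-threshold one at that index. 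The care lies in the boundary cases, namely indices with $\tilde p_i=\tilde p_i'$ or the case $\epsilon_{in}'=0$, where an index may fall into neither product and contribute the trivial factor $1$; each of these still satisfies the divisibility, but they must be checked explicitly. Multiplying over all indices yields $g_{(\epsilon_{in}',\epsilon_{out})}\mid g_{(\epsilon_{in},\epsilon_{out})}$, hence $\langle g_{(\epsilon_{in},\epsilon_{out})}\rangle\subseteq\langle g_{(\epsilon_{in}',\epsilon_{out})}\rangle$, and summing over admissible pairs gives $\mathcal{I}_{\epsilon_{in}',\epsilon_{out},D}\supseteq\mathcal{I}_{\epsilon_{in},\epsilon_{out},D}$.

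For part (3) no new computation is needed: I would insert the intermediate bound $\hat\epsilon=(\epsilon_{in}',\epsilon_{out})$ and chain the two monotonicities. Part (2), applied with common output bound $\epsilon_{out}$ and $\epsilon_{in}'\le\epsilon_{in}$, gives $\mathcal{I}_{\hat\epsilon,D}\supseteq\mathcal{I}_{\epsilon,D}$; part (1), applied with common input bound $\epsilon_{in}'$, gives $\mathcal{I}_{\epsilon',D}\supseteq\mathcal{I}_{\hat\epsilon,D}$ as soon as the output bounds are ordered so that $\epsilon'$ carries the smaller one. Transitivity of $\supseteq$ then closes the argument. In short, decreasing either coordinate of $\epsilon$ enlarges the ideal, so decreasing both simultaneously enlarges it, which is exactly the combined statement recorded in (3).
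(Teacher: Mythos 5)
Your proof is correct and follows essentially the same route as the paper's: part (1) is the observation that the per-pair generators are independent of $\epsilon_{out}$ while the index set of admissible pairs shrinks as $\epsilon_{out}$ grows, part (2) is the factor-by-factor divisibility of the generators (which the paper states more tersely as ``the terms concerned can only change to $x_i^2-1$,'' leaving your boundary cases implicit), and part (3) is the composition of the two. Your parenthetical caveat in part (3) --- that the chaining works ``as soon as the output bounds are ordered so that $\epsilon'$ carries the smaller one'' --- is well taken: the hypothesis $\epsilon_{out}'\ge\epsilon_{out}$ in the stated lemma is inconsistent with part (1) and is evidently a typo for $\epsilon_{out}'\le\epsilon_{out}$, and the paper's one-line proof of (3) (``combine (a) and (b)'') silently establishes that corrected version, exactly as you do.
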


\begin{proof}
  Notice that each summand $\mathcal{I}_{\epsilon, (p,p')}$ depends
  only on the value $\epsilon_{in}$, and as $\epsilon_{in}$
  increases, then the terms concerned can only change to $x_i^2-1$, and therefore the ideal $\mathcal{I}_{\epsilon, (p,p')}$ can
  only get smaller.

  Therefore, if $\epsilon_{out} = \epsilon_{out}'$, this proves (b).

  For (a), note that for $\epsilon_{in}$ constant, the summands $\mathcal{I}_{\epsilon, (p,p')}$
  do not change, but as $\epsilon_{out}$ increases, the number of these summands can only decrease, and therefore
  the ideal can only get smaller.

  For (c), combine (a) and (b).
\end{proof}

\begin{theorem}\label{thm:noise_inout-prob1}
Consider a monotone continuous function $h:[0,1]^n\rightarrow [0,1]$ and suppose we obtain noisy data, $D$ by sampling points in $[0,1]^n$
using a uniform distribution. If $\epsilon_{in}$ and $\epsilon_{out}$ are small enough, 
$W_{D}$ will eventually have
$\ww(h)$ as its unique minimal wiring diagram with probability 1. Equivalently,
with probability 1, $\mathcal{I}_{\epsilon,D}$ will eventually be equal
to $\mathcal{I}_{\ww(h)}$.
\end{theorem}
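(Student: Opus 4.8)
The plan is to prove the two inclusions $\mathcal{I}_{\epsilon,D}\subseteq\mathcal{I}_{\ww(h)}$ and $\mathcal{I}_{\ww(h)}\subseteq\mathcal{I}_{\epsilon,D}$ separately, exactly as in the proofs of Theorem~\ref{thm:prob1} and Theorem~\ref{thm:noise-prob1}, the first holding for every $\epsilon$ and the second holding eventually with probability $1$ once $\epsilon$ is small enough. As before, if $h$ is constant both ideals are $\langle 0\rangle$, so I would assume $\ww(h)=\{(x_1,1),\ldots,(x_k,1)\}$ with $k\ge 1$. For the easy inclusion, fix any pair $(\tilde p,\tilde p')$ contributing to $\mathcal{I}_{\epsilon,D}$, i.e.\ with $\tilde h(p')-\tilde h(p)>2\epsilon_{out}$. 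Since $|\eta|\le\epsilon_{out}$ this forces $h(p')>h(p)$, and because $h$ is monotone increasing in, and depends only on, $x_1,\ldots,x_k$, there is some $i\le k$ with $p'_i>p_i$. I would then check that the factor of $x_i$ in the generator of $\mathcal{I}_{\epsilon,(\tilde p,\tilde p')}$ lies in $\langle x_i-1\rangle$: if $|\tilde p'_i-\tilde p_i|\ge 2\epsilon_{in}$ the true increase forces $\sgn(\tilde p'_i-\tilde p_i)=1$, giving the factor $x_i-1$; otherwise the factor is $x_i^2-1=(x_i-1)(x_i+1)\in\langle x_i-1\rangle$. Either way the whole product generator lies in $\langle x_i-1\rangle\subseteq\mathcal{I}_{\ww(h)}$, so this inclusion holds for all $\epsilon$.

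For the hard inclusion I would reproduce, coordinate by coordinate, the construction from Theorem~\ref{thm:prob1}, but calibrated against the noise. For each $j\le k$, choose $p^{(j)},\bar p^{(j)}$ in the interior of $[0,1]^n$ agreeing off coordinate $j$, with a fixed input gap $\Delta_j=\bar p^{(j)}_j-p^{(j)}_j>0$ and output gap $g_j=h(\bar p^{(j)})-h(p^{(j)})>0$. By continuity of $h$, I would then pick $\delta_j>0$ small enough that every perturbation $p^s=\bar p^{(j)}+\delta_j s$, with $s$ a sign pattern in $\{-1,1\}$ on the $n-1$ coordinates $\ne j$, still satisfies $h(p^s)-h(p^{(j)})>g_j/2$. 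Now impose $\epsilon_{out}<\frac{1}{8}\min_j g_j$ and, crucially, $\epsilon_{in}\le\frac{1}{8}\min_j\min(\Delta_j,\delta_j)$. Shrinking to small open balls around $p^{(j)}$ and around each $p^s$, I would verify that for every true sample in these balls and every admissible noise realization: (i) the pair is retained, since $\tilde h(p^s)-\tilde h(p^{(j)})>2\epsilon_{out}$; and (ii) the observed difference in \emph{every} coordinate exceeds $2\epsilon_{in}$ in magnitude with the intended sign, so that no coordinate produces an $x_i^2-1$ term and the generator of $\mathcal{I}_{\epsilon,(\tilde p,\tilde p')}$ is exactly $(x_j-1)\prod_{i\ne j}(x_i-s_i)$.

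Summing over all $2^{n-1}$ sign patterns and using the identity $\langle\prod_{i\ne j}(x_i-s_i):s\rangle=R$ (whence $\langle(x_j-1)\prod_{i\ne j}(x_i-s_i):s\rangle=\langle x_j-1\rangle$), the pairs built above force $\langle x_j-1\rangle\subseteq\mathcal{I}_{\epsilon,D}$. Since only finitely many neighborhoods are involved across $j=1,\ldots,k$ and all sign patterns, and each is sampled with positive probability, with probability $1$ every one of them is eventually hit; as the construction is robust to all noise within the bounds, this yields $\langle x_1-1,\ldots,x_k-1\rangle=\mathcal{I}_{\ww(h)}\subseteq\mathcal{I}_{\epsilon,D}$. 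Combined with the easy inclusion, $\mathcal{I}_{\epsilon,D}=\mathcal{I}_{\ww(h)}$.

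I expect the main obstacle to be the tension in choosing the perturbation scale $\delta_j$: it must be small enough that $h(p^s)$ stays above $h(p^{(j)})$ (so the pair survives the output threshold and reflects the true monotonicity), yet large relative to $\epsilon_{in}$ so that input noise cannot shrink an observed coordinate difference below $2\epsilon_{in}$ and replace the clean linear factor $x_i-s_i$ by $x_i^2-1$, which would obstruct the algebraic collapse to $\langle x_j-1\rangle$. Resolving this is precisely what forces $\epsilon_{in}$ and $\epsilon_{out}$ to be small enough: once the finitely many gaps $\Delta_j,g_j$ and scales $\delta_j$ are fixed, the thresholds on $\epsilon$ follow, and the remaining steps are routine adaptations of the noise-free argument.
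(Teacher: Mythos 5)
Your proposal is correct and follows essentially the same route as the paper's proof: the same two inclusions, the same family of perturbed points $\bar p+\delta s$ indexed by sign patterns whose generators sum to $\langle x_j-1\rangle$, and the same open-neighborhood, probability-one sampling argument. Your treatment is somewhat more explicit than the paper's on the quantitative thresholds for $\epsilon_{in},\epsilon_{out}$ and on why $|\tilde p'_i-\tilde p_i|\ge 2\epsilon_{in}$ forces the correct sign, but these are refinements of the same argument rather than a different approach.
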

\begin{proof}
The proof is very similar to the proof of Theorem \ref{thm:prob1}, but there are some subtle differences such as the need of the factors $(x_i-1)(x_i+1)$ in the definition of the ideals $\mathcal{I}_{\epsilon,D}$. For completeness, we include all the details of the proof.

If $h$ is constant, then $\mathcal{I}_{w(h)}=\mathcal{I}_{\{\}}=\langle 0\rangle=\mathcal{I}_{\epsilon,D}$ for all observed data $D$.

If $h$ is not constant, without loss of generality we assume that $w(h)=\{(x_1,1),\ldots,(x_k,1)\}$.  Since $h$ is increasing (and not constant) with respect to $x_1$, there exists $p,\bar{p}\in [0,1]^n$ such that $p_i=\bar{p}_i$ for $i\geq 2$ and $p_1<\bar{p}_1$ and $h(p)<h(\bar{p})$. If $p$ or $\bar{p}$ happen to be on the boundary of $[0,1]^n$, using continuity we can pick new $p,\bar{p}$ values that are not on the boundary. Now, consider $S=\{(0,s_2,\ldots,s_n):s_i\in \{-1,1\}\}$ and for any element $s\in S$ and for $\delta>0$ define $p^{s}=\bar{p}+\delta s$. Note that $p^{s}$ is simply $\bar{p}$ after modifying all entries but the first one according to the sign pattern given by $s$. By continuity, we can choose $\delta>0$ such that $h(p)<h(p^s)$ for all $s\in S$. Now, since $sign(p^s_1-p_1)=1$ and $sign(p^s_i-p_i)=s_i$ for $i\geq 2$, we obtain
\begin{align*}
\mathcal{I}_{(p,p^{s})}=\langle (x_1-1)(x-s_2)(x-s_3)\ldots (x_n-s_n) \rangle.
\end{align*}
Furthermore, by continuity, for $\epsilon_{in}$ and $\epsilon_{out}$ small enough we also obtain that 
$h(p)+\eta + 2\epsilon_{out} <h(p^s)+\eta'$ and $|(p^s_i+\xi'_i)-(p_i+\xi_i)|>2\epsilon_{in}$ for all $s\in S$ and noise terms $\xi$, $\xi'$, $\eta$, $\eta'$. That is, $\tilde{h}(p)+ 2\epsilon_{out} <\tilde{h}(p^s)$ and $|\tilde{p}^s_i-\tilde{p}_i|>2\epsilon_{in}$

Then,
\begin{align*}
\mathcal{I}_{\epsilon, (\tilde{p},\tilde{p}^{s})}=\langle (x_1-1)(x-s_2)(x-s_3)\ldots (x_n-s_n) \rangle.
\end{align*}

If we denote $P_1=\{p\}\cup \{p^s: s\in S\}$, $D_1=(\tilde{P}_1,\tilde{h}|_{P_1})$, and $A_1:=\{ (x_1-1)(x_2-s_2)\ldots(x_n - s_n): s_j\in \{-1,1\}\}$, it follows that $\langle x_1-1 \rangle = \langle A_1\rangle\subseteq \mathcal{I}_{\epsilon,D_1}$ for any observed realization of $D_1$. 
By continuity, we can find open sets $B_p$, $B_{p^s}$ such that $\langle x_1-1 \rangle \subseteq \mathcal{I}_{\epsilon,D_1}$ as long as one point of each open set is selected.  If we sample points in $[0,1]^n$ uniformly, with probability 1 we will eventually sample points in these regions. Thus, with probability 1 we will eventually obtain $\langle x_1-1 \rangle \subseteq \mathcal{I}_{\epsilon,D}$. \\
The same argument shows that with probability 1 we will eventually obtain $\langle x_j-1 \rangle \subseteq \mathcal{I}_{\epsilon,D}$ for all $j=1,\ldots,k$ and thus $\langle x_1-1,\ldots, x_k-1 \rangle \subseteq \mathcal{I}_{\epsilon,D}$. 
The proof now follows from the fact that since $w(h)=\{(x_1,1),\ldots,(x_k,1)\}$, $\mathcal{I}_{\epsilon,D}\subseteq \langle x_1-1,\ldots, x_k-1 \rangle$
 for any observed data $D$. 
Indeed, if points observed $(p+\xi,\tilde{h}(p))$ and $(p'+\xi',\tilde{h}(p'))$ satisfy $\tilde{h}(p')-\tilde{h}(p)>2\epsilon_{out}$, then $h(p')>h(p)$. Since $h$ is increasing on variables $x_1,\ldots, x_k$, then $p'_i>p_i$ for some $i=1,\ldots,k$. If $|(p'_i+\xi'_i)-(p_i+\xi_i)|\ge 2\epsilon_{in}$, then $x_i-1$ is a factor of the  generator of $\mathcal{I}_{\epsilon,(p+\xi,p'+\xi')}$. On the other hand, if  $|(p'_i+\xi'_i)-(p_i+\xi_i)|< 2\epsilon_{in}$, then $(x_i-1)(x_i+1)$ is a factor of the  generator of $\mathcal{I}_{\epsilon,(p+\xi,p'+\xi')}$. In either case, $\mathcal{I}_{\epsilon,(p+\xi,p'+\xi')} \subseteq \langle x_i-1\rangle \subseteq \langle x_1-1,\ldots, x_k-1 \rangle$. Thus, $\mathcal{I}_{\epsilon,D}\subseteq \langle x_1-1,\ldots, x_k-1 \rangle$.
\end{proof}

We say that a wiring diagram $w$ is consistent with noisy data $D=(\tilde{P},\tilde{h}(P))$ if there exists a function $h^*$ with wiring diagram $w^*\subseteq w$ such that $h^*(P)=h(P)$. The next theorem states that all consistent wiring diagrams are encoded by $ \mathcal{I}_{\epsilon,D}$ even in the presence of noise.

\begin{theorem}\label{thm:noise_inout_exists}
Suppose we obtain noisy data $D$ by sampling points in $[0,1]^n$ using a uniform distribution. For $\epsilon_{in}$ and $\epsilon_{out}$ small enough,  with probability 1 a wiring diagram $w$ is consistent with $D$ if and only if $ \mathcal{I}_{\epsilon,D}\subseteq \mathcal{I}_w $.

\end{theorem}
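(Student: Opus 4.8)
The plan is to reduce the statement to results already established for the noise-free setting, together with the two ``eventual equality'' theorems. The first observation is purely definitional: a function $h^*$ witnessing consistency with the noisy data $D$ must satisfy $h^*(P) = h(P)$, i.e. it matches the \emph{true} values at the \emph{true} sample points. Hence ``$w$ is consistent with $D$'' is exactly the statement that $w$ is consistent, in the sense of Definition~\ref{def:minimal}, with the clean data $D_0 := (P, h|_P)$; that is, $w \in W_{D_0}$. By Proposition~\ref{prop:wd_ideal} this is equivalent to $\mathcal{I}_{D_0} \subseteq \mathcal{I}_w$, where $\mathcal{I}_{D_0} = \sum_{h(p) < h(p')} \mathcal{I}_{(p,p')}$ is the noise-free ideal of Section~\ref{sec:alg}. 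So the whole theorem amounts to showing that, with probability $1$ and for enough data, $\mathcal{I}_{\epsilon,D} \subseteq \mathcal{I}_w$ if and only if $\mathcal{I}_{D_0} \subseteq \mathcal{I}_w$.

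The forward direction I would prove directly and for any fixed data set, by showing $\mathcal{I}_{\epsilon,D} \subseteq \mathcal{I}_{D_0}$ with probability $1$. Fix a pair contributing to $\mathcal{I}_{\epsilon,D}$, so $\tilde h(p') - \tilde h(p) > 2\epsilon_{out}$. Since $|\eta|, |\eta'| \le \epsilon_{out}$, this forces $h(p) < h(p')$, so $(p,p')$ is a genuine clean pair and $\mathcal{I}_{(p,p')} \subseteq \mathcal{I}_{D_0}$. For each coordinate with $|\tilde p_i' - \tilde p_i| \ge 2\epsilon_{in}$ the bound $|\xi_i|, |\xi_i'| \le \epsilon_{in}$ guarantees (with probability $1$, ruling out the measure-zero coincidence $|\xi_i'-\xi_i| = 2\epsilon_{in}$) that $\sgn(\tilde p_i' - \tilde p_i) = \sgn(p_i' - p_i)$, while for each coordinate with $|\tilde p_i' - \tilde p_i| < 2\epsilon_{in}$ the factor $x_i^2 - 1$ is divisible by $x_i - \sgn(p_i' - p_i)$. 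Consequently the generator of $\mathcal{I}_{\epsilon,(\tilde p, \tilde p')}$ is a polynomial multiple of the generator $\prod_{p_i \ne p_i'}(x_i - \sgn(p_i'-p_i))$ of $\mathcal{I}_{(p,p')}$, hence lies in $\mathcal{I}_{D_0}$. Summing over pairs gives $\mathcal{I}_{\epsilon,D} \subseteq \mathcal{I}_{D_0}$, and therefore $\mathcal{I}_{D_0} \subseteq \mathcal{I}_w$ implies $\mathcal{I}_{\epsilon,D} \subseteq \mathcal{I}_w$.

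The converse is the main obstacle, and it genuinely requires the asymptotic hypothesis: for a fixed data set the noisy ideal $\mathcal{I}_{\epsilon,D}$ can be strictly smaller than $\mathcal{I}_{D_0}$, because any clean pair whose output gap $h(p') - h(p)$ fails to survive the threshold $2\epsilon_{out}$ after noise is simply dropped from the sum, so the inclusion $\mathcal{I}_{\epsilon,D} \subseteq \mathcal{I}_w$ carries no information about those pairs. To close this gap I would invoke the two companion results. Theorem~\ref{thm:noise_inout-prob1} shows that, for $\epsilon_{in},\epsilon_{out}$ small enough, with probability $1$ the noisy ideal eventually equals $\mathcal{I}_{\ww(h)}$; and Theorem~\ref{thm:prob1}, applied to the clean data $D_0$ (whose underlying points $P$ are themselves sampled uniformly), shows that with probability $1$ the clean ideal $\mathcal{I}_{D_0}$ eventually equals $\mathcal{I}_{\ww(h)}$ as well. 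Hence, with enough data and with probability $1$, $\mathcal{I}_{\epsilon,D} = \mathcal{I}_{\ww(h)} = \mathcal{I}_{D_0}$, so the equivalence $\mathcal{I}_{\epsilon,D} \subseteq \mathcal{I}_w \iff \mathcal{I}_{D_0} \subseteq \mathcal{I}_w$ is immediate; combined with the definitional reduction of the first paragraph and Proposition~\ref{prop:wd_ideal}, this yields the theorem. Equivalently, once $\mathcal{I}_{\epsilon,D} = \mathcal{I}_{\ww(h)}$, the inclusion $\mathcal{I}_{\epsilon,D} \subseteq \mathcal{I}_w$ says $\ww(h) \subseteq w$, and then $h^* = h$ itself witnesses consistency.
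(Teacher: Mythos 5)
Your forward direction is sound and is essentially the paper's argument, just routed through the clean ideal: you show each noisy generator is a polynomial multiple of the corresponding clean generator (sign preserved when the coordinate gap exceeds $2\epsilon_{in}$, and $x_i^2-1$ covering both signs otherwise), whereas the paper shows directly that each noisy generator lands in $\mathcal{I}_{w^*}$ for the witness function $h^*$. Either way this half works for any fixed data set. Your definitional reduction of ``consistent with noisy $D$'' to ``$w\in W_{D_0}$'' via Theorem~\ref{thm:def} and Proposition~\ref{prop:wd_ideal} is also legitimate.

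The gap is in the converse. The theorem has no ``eventually'' clause: it asserts that for the sampled data set $D$ --- of any size --- and for $\epsilon$ small enough (a threshold that may depend on $D$), the containment $\mathcal{I}_{\epsilon,D}\subseteq\mathcal{I}_w$ already implies consistency, with probability~1. By instead invoking Theorems~\ref{thm:prob1} and~\ref{thm:noise_inout-prob1} you only establish the equivalence in the asymptotic regime where both ideals have converged to $\mathcal{I}_{\ww(h)}$, where the statement is essentially trivial ($\ww(h)\subseteq w$ and $h$ itself is the witness). That is a strictly weaker claim, and it discards the content that matters for Section~\ref{sec:score}, where the scoring of minimal wiring diagrams is applied precisely to small data sets for which the ideals have \emph{not} yet converged. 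Your diagnosis that the converse ``genuinely requires the asymptotic hypothesis'' is not correct: what it requires is the small-$\epsilon$ hypothesis relative to the data (e.g.\ $\epsilon_{out}$ below a quarter of the smallest nonzero clean output gap, so that no pair is dropped by the $2\epsilon_{out}$ threshold, and $\epsilon_{in}$ small enough that surviving $x_i-1$ factors certify genuine coordinate increases up to a probability-zero tie). The paper's converse is constructive and works for any finite sample: assuming $\mathcal{I}_{\epsilon,D}\subseteq\mathcal{I}_w$ with $w=\{(x_1,1),\ldots,(x_k,1)\}$, one shows the data projected to the first $k$ coordinates is monotone (a violating pair would contribute a generator none of whose admissible factors lies in $\mathcal{I}_w$, using the small-$\epsilon$ and probability-1 distinctness of coordinates), then builds the witness $h^*$ by the max-envelope extension to a grid followed by multilinear interpolation, exactly as in the proof of Theorem~\ref{thm:def}. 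That construction is the missing idea in your proposal.
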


\begin{proof}
Suppose that $w$ is consistent with $D$ and denote with $h^*$ the corresponding function with wiring diagram $w^*$. Since $\mathcal{I}_{w^*}\subseteq \mathcal{I}_w$, it is enough to prove that $ \mathcal{I}_{\epsilon,D} \subseteq \mathcal{I}_{w^*} $. We will show that any generator $G$ of $ \mathcal{I}_{\epsilon,D} $ is in $\mathcal{I}_{w^*} $. 

Consider a generator of $\mathcal{I}_{\epsilon,D}$: 
\[ G = \prod_{|\tilde{p}_i' -\tilde{p}_i|\ge 2\epsilon_{in},\, \tilde{p}_i \ne \tilde{p}'_i} (x_i-\sgn(\tilde{p}'_i-\tilde{p}_i))\prod_{|\tilde{p}_i' -\tilde{p}_i| < 2\epsilon_{in}}(x_i^2 - 1) \]
where $\tilde{h}(p')-\tilde{h}(p) > 2\epsilon_{out}$. Note that this implies ${h}(p')>{h}(p)$ and hence $h^*(p')>h^*(p)$.

Since $h^*(p')>h^*(p)$ then $p'_i>p_i$ for some $i$ such that $x_i$ is an activator of $h^*$ or $p'_i<p_i$ for some $i$ such that $x_i$ is a repressor of $h^*$. Note that in this case $x_i-\sgn(\tilde{p}'_i-\tilde{p}_i)$ is one of the generators of $\mathcal{I}_{w^*}$.

Now we have two cases: 
If $|\tilde{p}'_i-\tilde{p}_i|\ge 2\epsilon_{in}$, then $x_i-\sgn(\tilde{p}'_i-\tilde{p}_i)=x_i-\sgn(p'_i-p_i)$ is a factor of $G$.
If $|\tilde{p}'_i-\tilde{p}_i|< 2\epsilon_{in}$, then $(x_i^2-1)$ is a factor of $G$, which implies that $x_i-\sgn(p'_i-p_i)$ is a factor of $G$. In any case, a factor of $G$ is one of the generators of $\mathcal{I}_{w^*}$ and hence $G$ is in $\mathcal{I}_{w^*}$.  
This proves that $\mathcal{I}_{\epsilon,D}\subseteq \mathcal{I}_w $.

Now, suppose that $\mathcal{I}_{\epsilon,D}\subseteq \mathcal{I}_w$ and without loss of generality assume $w=\{(x_1,1),\ldots,(x_k,1)\}$. 
Then, define $Q=\{(p_1,\ldots,p_k):p\in P\}$.
  
First, we claim that the data $D'=(Q,h|_P)$  is monotone increasing. That is, for $q,q'\in Q$ and  $v=h(p)$ and $v'=h(p')$,  if $q\leq q'$ (entrywise) then $v\leq v'$ (note that we are not saying $h$ is monotone). By contradiction suppose $h(p')=v'<v=h(p)$ and consider $\epsilon_{out}$ such that $\tilde{h}(p)-\tilde{h}(p')>2\epsilon_{out}$. Since
$\mathcal{I}_{\epsilon,(\tilde{p}',\tilde{p})}\subseteq\mathcal{I}_{\epsilon,D}\subseteq\mathcal{I}_w$, 
there is $i$ such that $(x_i,1)\in w$ and $x_i-1$ is a factor of the generator of $\mathcal{I}_{\epsilon,(\tilde{p}',\tilde{p})}$. This can only happen if $q_i-q'_i\geq 2\epsilon_{in}$ (so $q_i>q'_i$, a contradiction) or if $|q_i-q'_i|<2\epsilon_{in}$. Since $q\leq q'$ (entrywise), choosing $\epsilon_{in}$ small will mean that $q_i=q'_i$ which happens with probability zero. Thus, the data are monotone.

Second, we extend the data to cover a rectangular grid of values. Namely, for $y\in [0,1]^k$, we define 
$
g(y):=\max\{ v : y\leq q \text{ and } (q,v)\in D'\}.
$ We remark that $g$ is monotone increasing, so the data we obtain by restricting $g$ to a rectangular grid will also be monotone increasing.

Third, since we have monotone data on a rectangular grid, we can use multilinear interpolation to obtain a continuous function $L:[0,1]^k\rightarrow [0,1]$ that fits the data on a grid. Then, if we define $h^*:[0,1]^n\rightarrow [0,1]$ by $h(x)=L(x_1,\ldots,x_k)$, it follows that $h^*|_P=h|_P$ and $\ww(h^*)\subseteq w$. This completes the proof.
\end{proof}

%
%
%

\section{Selection of Wiring diagrams}
\label{sec:score}

Our results show that for enough data points, we will obtain the true wiring diagram. However, we need a scoring method to compare the minimal wiring diagrams when the number of data points is not large enough. We will define the scoring method for a single coordinate function at a time. 


For concreteness we assume that the in-degree follows a power law distribution \cite{Aldana2003,Barabasi509}. That is, suppose that there is a $1 \le k_0\le n$ such that $\text{Pr}(|W_{true}|=k)= \frac{c}{k^\gamma}$, for all $k \ge k_0$, and zero for $k < k_0$, where $W_{true}$ denotes the true local wiring diagram, $\gamma$ is a parameter, and $c=\frac{1}{\sum_{k=k_0}^n \frac{1}{k^\gamma}}$ is a normalization constant. First, consider the set of all local wiring diagrams consistent with data $D$, $W_D$. Remember that in order to compute $W_D$ it is enough to find its minimal local wiring diagrams ($W$ is an element of $W_D$ if and only if it contains some minimal local wiring diagram). Let $W_1,\ldots,W_l$ be the elements of $W_D$ and define $N_k=|\{W_j:|W_j|=k\}|$, $N^+_{ki}=|\{W_j:|W_j|=k \textrm{ and } (x_i,1)\in W_j \}|$, $N^-_{ki}=|\{W_j:|W_j|=k \textrm{ and } (x_i,-1)\in W_j \}|$, and $N_{ki}=N^+_{ki}+N^-_{ki}$. 
Note that $n \ge k \ge k_0$ if and only if $N_k \ne 0$.

\begin{proposition}\label{prop:score}
Up to a rescaling factor, $\text{Pr}((x_i,\pm1) \in W_{true}) = \sum_{k=1}^n \frac { N^\pm_{ki} }{ k^\gamma N_k }.$
\end{proposition}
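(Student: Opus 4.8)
The plan is to read the right-hand side as a marginal posterior probability under the hierarchical prior implicit in the setup, and then to verify the identity by regrouping the sum over candidate diagrams according to their size. Since $W_{true}$ is consistent with the data (taking $h^\ast = h$ in \cref{def:minimal} shows $\ww(h)\in W_D$), it is one of the elements $W_1,\dots,W_l$ of $W_D$; the task is to compute, conditional on the data, the probability that a given signed edge $(x_i,+1)$ (resp.\ $(x_i,-1)$) belongs to $W_{true}$.

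First I would make the prior explicit as the two-stage model matching the power-law assumption: the size $k=|W_{true}|$ is drawn with probability proportional to $c/k^\gamma$ on $k_0\le k\le n$, and, conditional on the size $k$ and on the data, $W_{true}$ is taken uniformly among the $N_k$ consistent diagrams of that size. Marginalizing the size then gives, for each candidate $W_j$,
\[
\text{Pr}(W_{true}=W_j \mid D)=\frac{c}{|W_j|^\gamma\,N_{|W_j|}},
\]
with $c=\big(\sum_{k=k_0}^{n}k^{-\gamma}\big)^{-1}$. I would check this is a genuine distribution on $\{W_1,\dots,W_l\}$ by grouping the $W_j$ by size: each size class contributes $N_k\cdot\frac{c}{k^\gamma N_k}=c/k^\gamma$, and summing over $k_0\le k\le n$ gives $1$. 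Here the remark ``$N_k\neq 0$ iff $k_0\le k\le n$'' is exactly what guarantees the denominators are nonzero and that the support of the posterior coincides with that of the prior.

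Next, because the events $\{W_{true}=W_j\}$ are disjoint and those with $(x_i,+1)\in W_j$ partition the event $\{(x_i,+1)\in W_{true}\}$, I would write
\[
\text{Pr}\big((x_i,+1)\in W_{true} \mid D\big)=\sum_{j:\,(x_i,+1)\in W_j}\frac{c}{|W_j|^\gamma\,N_{|W_j|}},
\]
and then regroup by size. For each $k$, the number of indices $j$ with $|W_j|=k$ and $(x_i,+1)\in W_j$ is $N^+_{ki}$ by definition, so the size-$k$ block contributes $N^+_{ki}\cdot\frac{c}{k^\gamma N_k}$. Summing over $k$ yields $c\sum_{k}\frac{N^+_{ki}}{k^\gamma N_k}$, which is the claimed expression up to the rescaling constant $c$; the identical argument with $N^-_{ki}$ handles the sign $-1$, giving both cases of the $\pm$.

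The arithmetic is only bookkeeping, so the real work is conceptual: pinning down the posterior so that the factor $N_k$ appears in the denominator. The point to get right is the two-stage structure — that the power law governs the \emph{size} directly and that diagrams of equal size are treated as equally likely given the data — since a naive single-stage prior over all diagrams would instead produce the total count $\binom{n}{k}2^k$ of size-$k$ diagrams rather than $N_k$. I would therefore state this modeling assumption explicitly at the outset and lean on the preceding remark to keep the sum well-defined, excluding the indeterminate terms with $N_k=0$ (equivalently the sizes $k<k_0$ outside the prior's support).
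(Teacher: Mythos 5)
Your proposal is correct and follows essentially the same route as the paper's proof: both assign $\text{Pr}(W_{true}=W_j)=c/(|W_j|^\gamma N_{|W_j|})$ by splitting the power-law prior on the size uniformly among the $N_k$ consistent diagrams of that size, and then regroup the sum over diagrams containing $(x_i,\pm 1)$ by size to obtain $c\sum_k N^\pm_{ki}/(k^\gamma N_k)$. Your added remarks — making the uniform-given-size assumption explicit and checking the posterior sums to $1$ — are sensible clarifications of what the paper leaves implicit, but not a different argument.
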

\begin{proof}
Since there are $N_k$ local wiring diagrams of size $k$, the probability that a wiring diagram $W_j$ of size $k$ is correct is $\text{Pr}(W_{true}=W_j)=\frac{\text{Pr}(|W_{true}|=k)}{N_k} = \frac{c}{k^\gamma N_k }$. Now, for a variable $x_i$ we obtain 
\begin{align*}
\text{Pr}((x_i,\pm 1)\in W_{true})  
& = \sum_{j=1}^l\text{Pr}((x_i,\pm 1)\in W_{true}|W_{true}=W_j) \ \text{Pr}(W_{true}=W_j)
\\
& = \sum_{k=1}^n\sum_{|W_j|=k}\text{Pr}((x_i,\pm 1)\in W_{true}| W_{true}=W_j) \ \text{Pr}(W_{true}=W_j)
\\
& = \sum_{k=1}^n \sum_{\substack{|W_j|=k \\ (x_i,\pm 1)\in W_j}} \ \text{Pr}(W_{true}=W_j)
\\
& = \sum_{k=1}^n \sum_{\substack{|W_j|=k \\ (x_i,\pm 1)\in W_j}} \ \frac{c}{k^\gamma N_k }
\\
& = c \sum_{k=1}^n\frac{N^\pm_{ki}}{k^\gamma N_k }
\end{align*}

Thus, the scores of $(x_i,\pm 1)$ are defined as

\[S^\pm(i)=c\sum_{k=1}^n\frac{N^\pm_{ki}}{k^\gamma N_k},\]
where $0/0:=0$. 
\end{proof}

We remark that this score is a probability. We define the score of $x_i$ as 
\[S(i)=S^+(i)+S^-(i)=c\sum_{k=1}^n\frac{N_{ki}}{k^\gamma N_k}.\]

If we need to score the local wiring diagrams as well, we can use the score (or probability) 

\[S(W)=\prod_{(x_i,1)\in W}S^+(i) \prod_{(x_i,-1)\in W}S^-(i)  \prod_{\substack{(x_i,1)\notin W\\ \text{and} \\ (x_i,-1)\notin W}}(1-S^+(i)-S^-(i)).\]
%
%

In the case that we have previous knowledge of the local wiring diagram, then we can incorporate such information in the scoring methods. For example, if it is known that $x_r$ is an activator, then we would define $N_k=|\{j:|W_j|=k \text{ and } (x_r,1)\in W_j\}|$, $N^\pm_{ki}=|\{j:|W_j|=k, (x_r,1)\in W_j, \textrm{ and } (x_i,\pm1)\in W_j \}|$.

\begin{example} \rm
We consider the fish population example.  The following describes a specific data set $(D, h|_D)$, and the wiring diagram ideals and scoring on each edge that result.  We generated $D$ by taking 40 data points chosen uniformly at random with each coordinate in the range $[0, 6]$.  We focus here on the local wiring diagrams for the variables $A$ and $B$.  We consider the possible ideals $I_{\epsilon, D}$ and the resulting probability scores on the possible 10 edges coming in to these variables.

Since this is a small data set, we restrict to small $\epsilon$: $0 \le \epsilon_{in} \le .1$, and $0 \le \epsilon_{out} \le .03$.  The maximum output error was chosen because greater than about $.04$, the ideal is the zero ideal, that is, there are no pairs of data points with different values greater than twice this value.  At that point, one cannot make any prediction at all, beyond our probability model.

For variable $B$, the actual local wiring diagram is $(A,1)$.  
As shown in Figure~\ref{fig:fish_varB_all}, the scores on the monomial ideals give the correct wiring diagram for variable $B$. For variable $A$, the local wiring diagram is $(C, 1), (D, 1), (E, 1)$; however, only $D$ and $E$ have probability 1 (see Figure~\ref{fig:fish_varA_all}). We conclude from this initial analysis that more data are required, which is consistent with Figure~\ref{fig:fish-output-error}. We next generate 1000 data points for variable $A$
and restrict to a smaller error: $\epsilon$: $0 \le \epsilon_{in} \le .0002$, and $0 \le \epsilon_{out} \le .0003$, which returns the correct local wiring diagram $(C, 1), (D, 1), (E, 1)$ as shown in Figure~\ref{fig:fish_varA_all_100}. Interestingly, on this larger dataset, over all points, there were only 3 monomial ideals, meaning there are at most 3 different color levels in each of these graphs.
\begin{figure}[htbp]
  \centering
  \includegraphics[width=\textwidth]{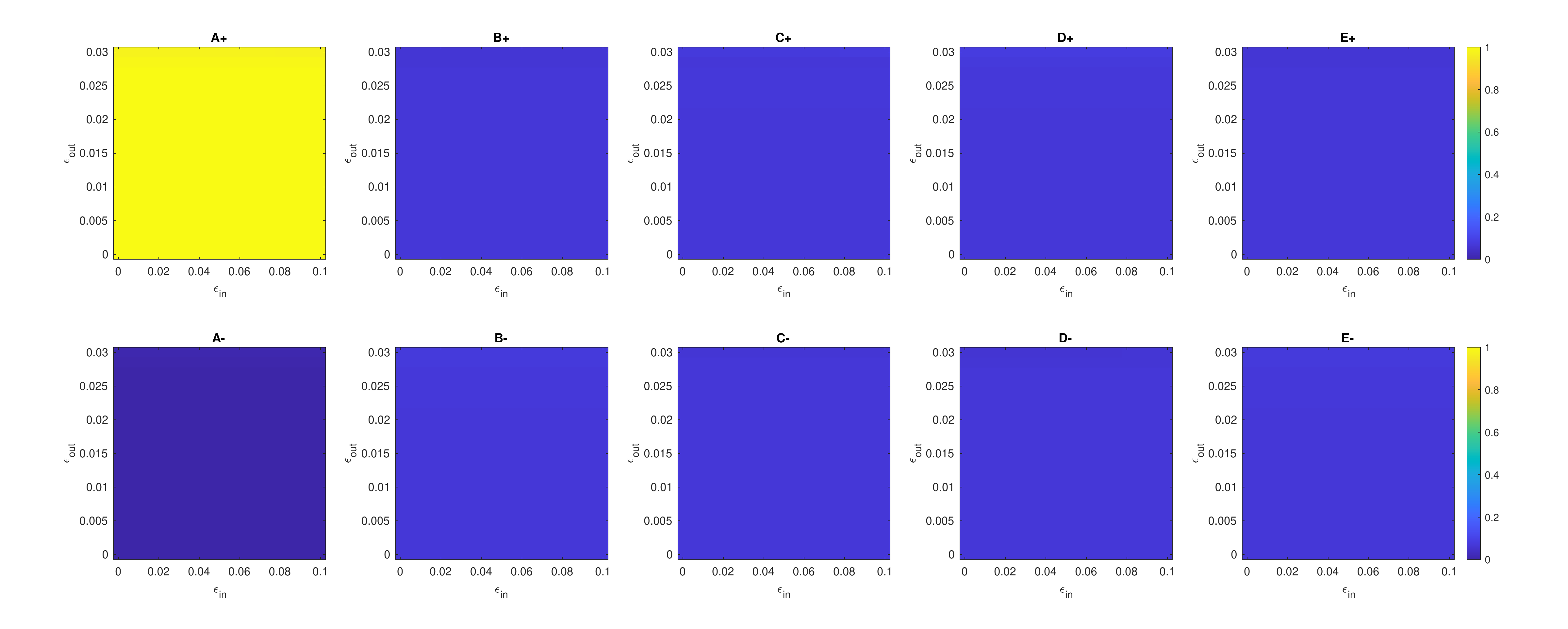}
  \caption{Scores of variable $B$ with 40 data points.}
  \label{fig:fish_varB_all}
\end{figure}

\begin{figure}[htbp]
  \centering
  \includegraphics[width=\textwidth]{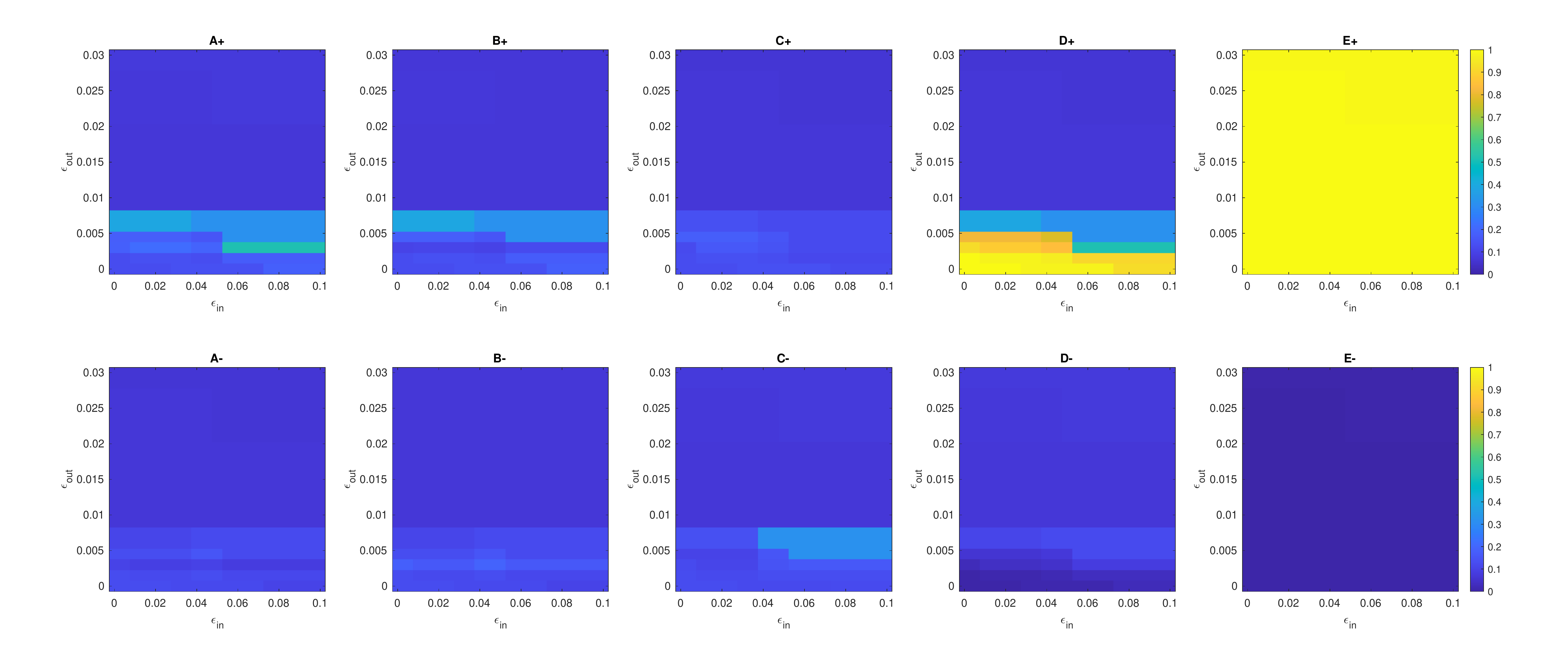}
  \caption{Scores of variable $A$ with 40 data points.}
  \label{fig:fish_varA_all}
\end{figure}

\begin{figure}[htbp]
  \centering
  \includegraphics[width=\textwidth]{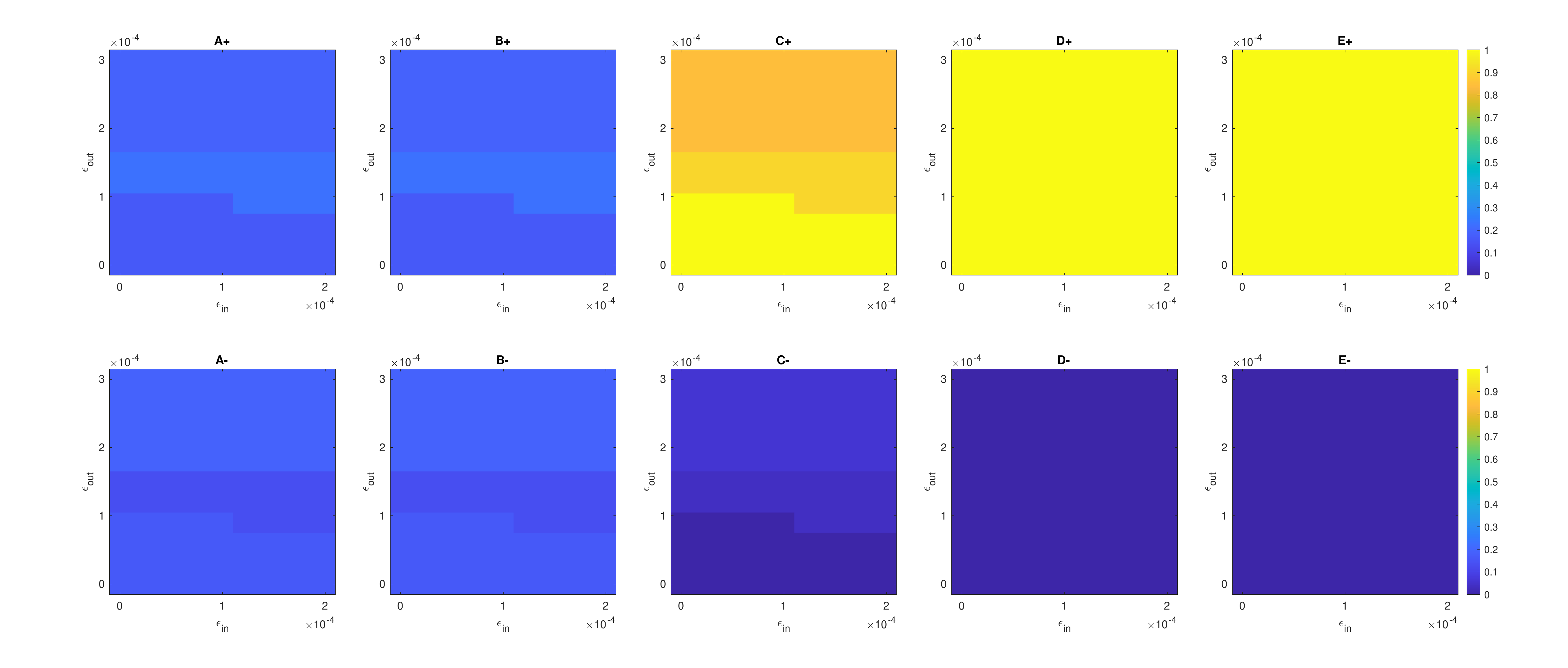}
  \caption{Scores of variable $A$ with 1000 data points.}
  \label{fig:fish_varA_all_100}
\end{figure}

\end{example}

\subsection{Computations}
All computations were performed in Macaulay2 \cite{M2} and will be included as a package in the next M2 distribution. Results from M2 computations were visualised using Matlab.

\section{Conclusions}
\label{sec:conclusions}


We presented an algorithm that computes all minimal wiring diagrams that are consistent with continuous-space data and provides signs to all interactions. Rather than trying to infer model equations and parameter values, our method proposes coarser information at the ``wiring-diagram'' level, without the need to perform parameter estimation. Thus, this method can be used in cases where the functional form of the regulation between variables is possibly unknown. Our algorithm relies on tools from algebraic geometry which has the potential to bring algebraic theory to the problem of reverse engineering. For example, one topic of interest in future is to be able to take measurements in a way that ``maximizes'' information, which algebraically correspond to finding data sets for which we have a unique (or few) irreducible component or prime ideal. Results that show how to minimize the number of irreducible components and prime ideals may provide a theoretical foundation to design experiments in a systematic fashion.


\section*{Acknowledgments}
The first two authors thank Nick Trefethen for suggesting differential monotone spline functions. We thank Hamid Rahkooy for helpful comments on this manuscript. 

\bibliographystyle{siamplain}
\bibliography{references}
\end{document}